\documentclass[superscriptaddress, reprint, amsmath, amssymb, aps, prx, floatfix]{revtex4-2}

\usepackage{bm}

\usepackage{mathtools}
\usepackage{amsfonts}
\usepackage{amssymb}
\usepackage{dsfont}
\usepackage{amsthm}
\usepackage[english]{babel}
\usepackage{graphicx}
\usepackage{amsmath}
\usepackage{lipsum}
\usepackage{wrapfig}
\usepackage{float}
\usepackage{enumitem} 
\usepackage{color} 
\usepackage{hyperref}
\usepackage[normalem]{ulem}
\usepackage{url}

\useunder{\uline}{\ul}{}

\hypersetup{colorlinks = true}

\def\bra#1{\ensuremath{\mathinner{\langle{#1}|}}}
\def\ket#1{\ensuremath{\mathinner{|{#1}\rangle}}}
\DeclarePairedDelimiter\abs{\lvert}{\rvert}%
\newcommand{\norm}[1]{\left\lVert #1 \right\rVert}

\newcommand{\needcite}[1]{\textcolor{red}{[Ref needed]}}
\usepackage{qcircuit}
\newtheorem{theorem}{Theorem}
\newtheorem{definition}{Definition}
\newtheorem{lemma}{Lemma}

\newtheorem{corollary}{Corollary}[theorem]
\newtheorem{proposition}{Proposition}

\begin{document}

\title{Progress toward favorable landscapes in quantum combinatorial optimization}

\author{Juneseo Lee}
\affiliation{Department of Mathematics, Princeton University, Princeton, New Jersey 08544, USA}
\affiliation{Department of Chemistry, Princeton University, Princeton, New Jersey 08544, USA}

\author{Alicia B. Magann}
\affiliation{Department of Chemical \& Biological Engineering, Princeton University, Princeton, New Jersey 08544, USA}

\author{Herschel A. Rabitz}
\affiliation{Department of Chemistry, Princeton University, Princeton, New Jersey 08544, USA}

\author{Christian Arenz}
\affiliation{Department of Chemistry, Princeton University, Princeton, New Jersey 08544, USA}

\affiliation{School of Electrical, Computer and Energy Engineering, Arizona State University, Tempe, Arizona 85287, USA}

\date{\today}

\begin{abstract}

The performance of variational quantum algorithms relies on the success of using quantum and classical computing resources in tandem. Here, we study how these quantum and classical components interrelate. In particular, we focus on algorithms for solving the combinatorial optimization problem MaxCut, and study how the structure of the classical optimization landscape relates to the quantum circuit used to evaluate the MaxCut objective function. In order to analytically characterize the impact of quantum features on the critical points of the landscape, we consider a family of quantum circuit ans\"atze composed of mutually commuting elements. We identify multiqubit operations as a key resource, and show that overparameterization allows for obtaining favorable landscapes. Namely, we prove that an ansatz from this family containing exponentially many variational parameters yields a landscape free of local optima for generic graphs. However, we further prove that these ans\"atze do not offer superpolynomial advantages over purely classical MaxCut algorithms. We then present a series of numerical experiments illustrating that non-commutativity and entanglement are important features for improving algorithm performance.

\end{abstract}

\maketitle

\section{Introduction}
Quantum computers promise to offer computational advantages over classical computers for certain high-value tasks \cite{10.1109/SFCS.1994.365700,grover1996fast,lloyd1996universal}. However, the availability of fault-tolerant quantum computers that can achieve these speedups at meaningful scales is likely years away. In the meantime, the advent of noisy, intermediate-scale quantum (NISQ) \cite{preskill_quantum_2018} devices has inspired tremendous interest in variational quantum algorithms (VQAs), which aim to leverage the computing power of NISQ devices to solve a broad range of scientific problems, with applications spanning quantum chemistry \cite{peruzzo_variational_2013}, combinatorial optimization \cite{2014arXiv1411.4028F}, machine learning \cite{Dunjko2020nonreviewofquantum}, and linear systems \cite{2019arXiv190905820B}. VQAs function by using NISQ hardware in tandem with a classical processor \cite{McClean_2016}. At the outset, an objective function $J$ is defined that encodes the solution to a problem of interest. Then, a classical computer is used to iteratively optimize $J$. The optimization is conventionally performed over a set of parameters associated with a quantum circuit, or \emph{ansatz}, which is used to evaluate $J$ on the NISQ device. In order to achieve strong performance, the classical optimization procedure and the quantum objective function evaluation must function successfully together. As such, an understanding of the associated costs and challenges of the quantum and classical aspects of VQAs, and how they interrelate, is highly desirable. 

\begin{figure}[h!]
	\includegraphics[width=0.90\linewidth]{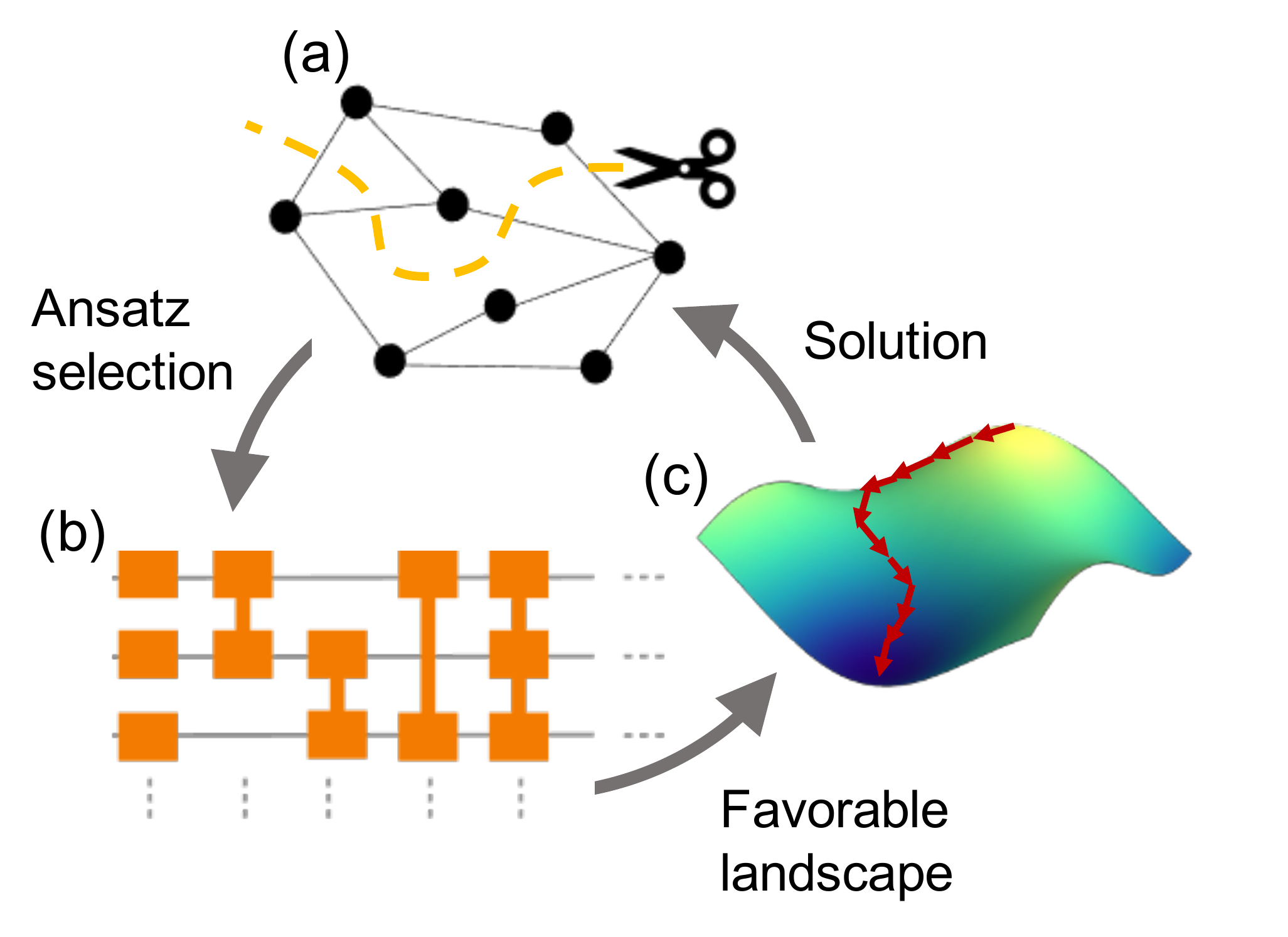}
	\caption{\label{Fig:Intro} Pictorial representation of how ansatz selection can be informed by the interplay between the problem instance and the underlying optimization landscape structure for solving the MaxCut problem. In particular, the graph structure (a) can be utilized to guide the development of an ansatz (b) for a VQA that yields favorable landscape properties (c), resulting in enhanced convergence. Here, we introduce a family of ans\"atze as a toy model that allows for analytically studying the associated landscape critical point structure, and prove that an ansatz from this family containing exponentially many variational parameters yields a landscape that is free of local optima. We go on to use this ansatz family as a starting point for exploring relations between inclusion of quantum features in ans\"atze, scalability, and VQA performance.}   
\end{figure}

To date, significant attention has been paid to the quantum component of VQAs, in an effort to develop effective strategies for evaluating the objective functions associated with different problems of interest, and a plethora of ans\"atze have been developed to target different applications \cite{peruzzo_variational_2013,2014arXiv1411.4028F,Dunjko2020nonreviewofquantum}. These application-oriented ans\"atze are often motivated by physical intuition. However, hardware-efficient ans\"atze have also been developed that are motivated by a convenient implementation on typical NISQ platforms \cite{peruzzo_variational_2013, kandala2017hardware}. Furthermore, a variety of error mitigation schemes have been developed in order to bolster the utility of ans\"atze in different settings \cite{PhysRevX.7.021050,PhysRevX.8.031027,PhysRevLett.119.180509,PhysRevA.98.062339}. In comparison to these efforts studying the quantum component of VQAs, fewer analyses have been done with respect to the classical component \cite{Grimsley2019_Adapt,Stokes2020,Koczor2020,PRXQuantum.1.020319}. The latter aspect of VQA performance can carry a significant computational cost. This circumstance arises because the classical optimization problem is non-convex in general \cite{jain2017non}, due to the fact that the quantum circuit parameters typically enter in a nonlinear manner into $J$. This can cause local optima to appear, which can render the classical search for a global optimum of $J$ prohibitively difficult \cite{BittelKliesch}.

Here, we seek to obtain a deeper understanding of these issues by analyzing the optimization landscapes, defined by the objective $J$ as a function of the quantum circuit parameters. The precise manner in which $J$ depends on the parameters is dictated by the interplay between the ansatz and the problem at hand, and consequently, the quantum and classical components of VQAs are intimately related. Thus far, numerical and theoretical observations have confirmed the presence of local optima in VQA landscapes for commonly employed ans\"atze \cite{BittelKliesch,Wierichs2020,Moll2018}. However, numerical observations have suggested that overparameterization of an ansatz can yield a more favorable landscape structure \cite{kiani2020learning, PRXQuantum.1.020319}.  We note that similar findings on overparameterization have appeared in the analysis of quantum control landscapes \cite{rabitz2004quantum,chakrabarti2007quantum,russell2017control,PRXQuantum.2.010101}, where in the latter setting, overparameterization takes the form of ``sufficient'' pulse-level control resources. In fact, VQAs can themselves be considered a form of quantum learning control experiment \cite{PhysRevLett.68.1500, PhysRevLett.118.150503, PhysRevA.101.032313, PhysRevA.102.062605, 7052406}, where the control is performed at the quantum circuit level, rather than at the conventional pulse level \cite{PRXQuantum.2.010101}. Similar findings on the effects of overparameterization have also appeared in the study of classical neural network landscapes \cite{pmlr-v119-shevchenko20a, allen2018learning, chen2019much}. Despite these observations, analytical analyses and rigorous results regarding the critical point structure of  VQA landscapes remain scarce, and a comprehensive understanding of VQA landscapes has not yet been attained. 

In this article, we make a step in this direction. Namely, we explore quantum-classical tradeoffs in VQAs by studying how the structure of the classical optimization landscape relates to the problem instance and to the quantum ansatz used to evaluate $J$. In particular, we investigate how ans\"atze can be designed to yield favorable landscapes that are free of local optima. To do so, we examine how quantum resources, such as entangling gates, can be harnessed to improve the optimization landscape structure to contain only global optima and saddle points, the latter of which often do not hinder local searches from finding global optima efficiently \cite{lee2017first,pmlr-v40-Ge15, levy2016power, pmlr-v70-jin17a}. However, as these aims are difficult to achieve in general, we focus here on applications of VQAs for solving the combinatorial optimization problem MaxCut, and consider ans\"atze that allow for analytically characterizing the critical point structure of the optimization landscape in this setting. Our general approach for this is depicted in Fig. \ref{Fig:Intro}. That is, in section \ref{sec:xansatz} we consider a family of ans\"atze with elements generated by mutually commuting $k$-body operators \cite{doi:10.1098/rspa.2008.0443, fujii2017commuting}, and use these ans\"atze as a toy model for our studies. The considered ans\"atze can be tailored to the structure of the graph under consideration, and allow for analyzing the impact of adding variational parameters in a systematic fashion. To this end, we first prove that for generic graphs on $n$ vertices, an ansatz from this family containing $2^{n-1}-1$ variational parameters yields a landscape free of local optima. We go on to explore the prospect of achieving favorable landscapes through ans\"atze with polynomially many parameters, and discuss how these findings relate to classical algorithmic capabilities. For instance, despite the inclusion of entangling operations, we show that the considered ansatz family does not offer a superpolynomial advantage over purely classical MaxCut schemes. We then numerically explore in section \ref{sec:numerics} how the algorithm performance is affected by incorporating non-commutativity into the ansatz, and compare our findings against the quantum approximate optimization algorithm (QAOA) \cite{2014arXiv1411.4028F}.

\section{\label{sec:ansatzBackground}Preliminaries}
\noindent Consider a quantum circuit 
\begin{align}
\label{eq:parameterizedCirc}
U(\bm{\theta})=\prod_{j=1}^{M}e^{-i\theta_{j}H_{j}},
\end{align}
parameterized by a set of $M$ (variational) parameters collected in the vector $\bm{\theta}=(\theta_{1},\cdots,\theta_{M})$, where $H_{j}$ are Hermitian operators. The parameterized circuit \eqref{eq:parameterizedCirc} defines an ansatz for optimizing an objective function $J(\bm{\theta})$. The objective function considered here takes the form 
\begin{align}
\label{eq:costfunction}
J(\bm{\theta})=\bra{\varphi(\bm{\theta})} H_{p}\ket{\varphi(\bm{\theta})},
\end{align}
where $H_p$ is the so-called problem Hamiltonian, and the state $\ket{\varphi(\bm{\theta})}=U(\bm{\theta})\ket{\phi}$ is created through the parameterized circuit starting from a fixed initial state $\ket{\phi}$. The goal of VQAs is then to solve the optimization problem 
\begin{align}
\label{eq:optimizationprob}
\min_{\bm{\theta}\in\mathbb R^{M}}J(\bm{\theta}). 
\end{align}
Solving (\ref{eq:optimizationprob}) is typically accomplished by iteratively searching for the parameters $\bm{\theta}$ that minimize $J$ in a hybrid quantum-classical fashion. In each iteration a quantum device is used to create the state $\ket{\varphi(\bm{\theta})}$, followed by expectation value measurements to infer the value of $J$. Then, a classical search routine is employed to determine how to update the values of $\bm{\theta}$ for subsequent iteration. This procedure is repeated until convergence is achieved. In order for this procedure to be scalable, the depth of the parameterized circuit, and the number of variational parameters $M$, should each scale at most polynomially in the number of qubits.

\subsection{Optimization landscapes of VQAs}
The ease of finding the parameter configuration that minimizes $J$ depends on the structure of the optimization landscape, given by $J$ as a function of $\bm{\theta}$. This landscape structure depends on how the components of $\bm{\theta}$ enter in the associated quantum circuit $U(\bm{\theta})$ in conjunction with the form of $H_{p}$.  In order to obtain a favorable landscape, one obvious choice is to consider ans\"atze \cite{PhysRevA.101.032308, zhang2021low} that allow for directly varying over the $\mathcal{O}(2^n)$ coefficients of $|\varphi(\bm{\theta})\rangle$ in the eigenbasis of $H_p$, assuming the eigenbasis is known, e.g., as for the MaxCut problem studied below. In this case the optimization problem is convex and constrained due to the normalization of $|\varphi(\bm{\theta})\rangle$. However, such ans\"atze lack the flexibility to systematically reduce the number of variational parameters, while still maintaining guarantees on the reachability of the ground state. Furthermore, there is no obvious mechanism for tailoring their structure to the problem instance at hand. 

In order to address these challenges, here we consider the more common case where the variational parameters $\bm{\theta}$ enter in a non-convex manner, and consequently, the associated optimization landscapes may contain local optima. In this setting, an assessment of the associated optimization landscape relies on an analysis of the set of critical points $\{\bm{\theta}^{*}\}$ at which the gradient $\nabla J(\bm{\theta})$ vanishes. For objective functions of the form \eqref{eq:costfunction}, the $j$-th component of the gradient takes the form
\begin{align}
\label{eq:gradientgeneral}
    \frac{\partial}{\partial \theta_{j}}J(\bm{\theta}) &= -i\bra{\varphi(\bm{\theta})}[H_{p}, W_{j}H_{j}W_{j}^{\dagger}]\ket{\varphi(\bm{\theta})},
\end{align}   
where $W_{j}=\prod_{k=j}^{M} e^{-i\theta_{k}H_{k}}$. Due to the form of the gradient \eqref{eq:gradientgeneral}, it is evident for each of the eigenstates of $H_{p}$, which are reachable through $U(\bm{\theta})$, that the corresponding parameter configurations constitute critical points. In addition, there could also be situations where parameter configurations that yield non-eigenstates constitute critical points. 

In order to characterize the type of critical point (e.g. saddle point, minimum, maximum), the Hessian matrix, denoted by $\nabla^{2}J(\bm{\theta})$, can be used. Using the short-hand notation $A_{j}\equiv W_{j}H_{j}W_{j}^{\dagger}$, the components of the Hessian are given by 
\begin{align}\label{eq:hess}
   \frac{\partial^{2}}{\partial \theta_{j}\partial \theta_{k}}J(\bm{\theta}) &= -\bra{\varphi(\bm{\theta})}[[H_{p},A_{j}],A_{k}]\ket{\varphi(\bm{\theta})} \nonumber \\
    &~~~~ -i\bra{\varphi(\bm{\theta})}\frac{\partial}{\partial \theta_{j}}A_{k}\ket{\varphi(\bm{\theta})}.
\end{align}
To distinguish a saddle point from a local optimum, we define a local optimum as follows:

\begin{definition}\label{def:trap}
A local optimum is a critical point that does not correspond to a global optimum or a saddle, but at which the Hessian is positive or negative semidefinite. 
\end{definition} 

As recent studies have suggested that saddle points often do not hinder gradient-based algorithms from efficiently finding global optima \cite{lee2017first,pmlr-v40-Ge15, levy2016power, pmlr-v70-jin17a}, here we focus on the question of whether it is possible to remove local optima by appropriately choosing the ansatz and the initial state $\ket{\phi}$. In particular, we explore which parameterized gates allow for such favorable landscape properties. These considerations emphasize the interplay between quantum resources and classical capabilities. Addressing them in the most general form is challenging, as it would require the ability to analytically track the dependence of $H_{p}$ and $U(\bm{\theta})$ on $\nabla J$ and $\nabla^{2}J$. As a consequence, here we focus on a particular Hamiltonian $H_{p}$ that has high practical relevance. In particular, we restrict ourselves to Ising Hamiltonians whose ground states encode solutions to the graph-partitioning problem MaxCut.

\subsection{Formulations of the MaxCut problem}
Consider a weighted, undirected graph $G=(V,E)$, where $V$ denotes the set of $n$ vertices, $E$ denotes the set of edges, and $w_{a,b}\geq 0$, with $(a,b)\in E$, denote the corresponding non-negative edge weights. The MaxCut problem then corresponds to partitioning $V$ into two subsets such that the sum of the weights belonging to the edges connecting the two subsets is maximized. More formally, for some subset of vertices $S\subset V$, whose complement is denoted by $S^{c}$, we define the cut set $\mathsf{Cut}(S)$ with respect to the partition $\{S,S^{c}\}$ by $\mathsf{Cut}(S)=\{(a,b)\in E,\, |\, a\in S, b\in S^{c}\}$. If we denote by $
 \mathsf{CutVal}(S) = \sum_{(a,b)\in \mathsf{Cut}(S)} w_{a,b}$  
the corresponding cut value, the MaxCut problem for $G$ reduces to solving
\begin{align}
\label{eq:maxcut}
    \mathsf{MaxCut}(G) = \max_{S\subset V} \mathsf{CutVal}(S).
\end{align}
We note that the MaxCut problem is equivalent to the binary quadratic program  
\begin{align}
\label{eq:binaryopt}
 &\text{minimize}~\sum_{(a,b)\in E}w_{a,b}(x_{a}x_{b}-1)/2, \nonumber \\
 &\text{subject~to}~x_{a}\in\{\pm 1\}~\text{for~every}~a\in V,
 \end{align}
 which is known to be both \textsf{NP}-hard \cite{Karp1972} and \textsf{APX}-hard \cite{Papadimitriou1991} for generic graphs $G$. As such, significant effort has been dedicated to the development of heuristics and approximation algorithms that efficiently yield high cut values. For example, the Goemans-Williamson (GW) algorithm involves the relaxation of the discrete optimization problem \eqref{eq:binaryopt} into a semidefinite program, whose initial solution is then rounded to obtain a final solution \cite{GW}.
 
It is well-known \cite{VQAReview, IsingNP} that solving the MaxCut problem is also equivalent to finding the ground state of an $n$-qubit Ising Hamiltonian 
\begin{align}
\label{eq:IsingHam}
H_{p}=\sum_{(a,b)\in E}w_{a,b}Z_{a}Z_{b},
\end{align}
where $Z_{a}=\mathds{1}\otimes\cdots \otimes Z\otimes\cdots\otimes\mathds{1}$ denotes the Pauli operator $Z=\text{diag}(1,-1)$ acting non-trivially on the $a$th qubit. In this setting, MaxCut can be formulated as the optimization problem
\begin{align}
\label{eq:maxcutquant}
\min_{\bm{\theta}\in \mathbb R^{M}}\sum_{(a,b)\in E}w_{a,b}\left(\bra{\varphi(\bm{\theta})}Z_{a}Z_{b}\ket{\varphi(\bm{\theta})}-1\right)/2.
\end{align}
If we denote the eigenstates of $H_{p}$ by $\ket{z}$ with $z\in\{0,1\}^{n}$, we see that each $\ket{z}$ corresponds to a cut of the graph as $\frac{1-\bra{z}Z_{a}Z_{b}\ket{z}}{2}\in\{0,1\}$, so it is useful to denote by $S_{z}\subset V$ the set of vertices that are assigned a ``1'' in the bitstring $z$ associated with $\ket{z}$. The equivalence $\mathsf{CutVal}(S_{z})= \mathsf{CutVal}(S_{z}^{c})$ is reflected in fact that the eigenstates of $H_{p}$ come in pairs with the same eigenvalue, or equivalently cut value, due to the $\mathbb Z_{2}$ symmetry of the Ising Hamiltonian \eqref{eq:IsingHam}. We refer to a set of $2^{n-1}-1$ vertex subsets without any of their complements as being non-symmetric. The minimization problem \eqref{eq:maxcutquant} can be considered a relaxation of the discrete MaxCut optimization problem \eqref{eq:binaryopt} to quantum states, rather than real vectors on a sphere as in the GW relaxation. It is widely hoped that varying over quantum states $\ket{\varphi(\bm{\theta})}$ will offer advantages. However, it is largely unknown which quantum features could provide such advantages.

Investigating the optimization landscape of $J(\bm{\theta})$ for the Ising Hamiltonians \eqref{eq:IsingHam} offers a potential path forward for assessing what quantum features have to offer in the context of solving the  MaxCut problem. To this end, in a recent preprint \cite{BittelKliesch} it has been shown that using an ansatz of the form
\begin{align}
\label{eq:classical}
U(\bm{\theta}) &= \prod_{j=1}^{n} e^{-i\theta_{j}X_{j}},
\end{align}
and taking the initial condition $\ket{\phi}=\ket{\bm{0}}=\ket{0}\otimes \cdots \otimes \ket{0}$ to be the highest excited state of \eqref{eq:IsingHam}, yields local optima in general, from which it is concluded that the classical optimization problem is \textsf{NP}-hard. In Eq. (\ref{eq:classical}), $X_{j}$ denotes the Pauli operator $X=\left(\begin{matrix}0&1\\1&0\end{matrix}\right)$ that acts non-trivially on the $j$th qubit. Since the eigenstates of the Ising Hamiltonian take the form $\ket{z}$ with $z\in\{0,1\}^{n}$, the classical ansatz \eqref{eq:classical} can be interpreted as continuously flipping qubits. We note that even though $\exp(-i\theta_{j}X_{j})$ allows for coherent superpositions of qubit states of the form $\cos(\theta_{j})\ket{0}-i\sin(\theta_{j})\ket{1}$, as $H_{p}$ is diagonal in the computational basis, such coherent superpositions do not give any advantage over convex combinations. That is, in both cases the objective function takes the form, 
\begin{align}
\label{eq:classicalost}
J(\bm{\theta})=\sum_{(a,b)\in E} w_{a,b}\cdot \cos(2\theta_{a})\cos(2\theta_{b}), 
\end{align}   
whose optimization landscape provably contains local optima depending on the graph structure. As such, we refer to \eqref{eq:classical} as a ``classical'' ansatz. Furthermore, given that the relative phase of each qubit state does not change $J$, even an ansatz consisting of generic local $\text{SU}(2)$ operations applied to each qubit will produce an objective function of the form \eqref{eq:classicalost}. Consequently, starting from $\ket{\bm{0}}$, the use of generic local operations alone does not yield favorable optimization landscapes. With this in mind, below we explore the effect of incorporating entangling gates.

\section{\label{sec:xansatz} A family of ans\"atze for removing local optima}
Looking beyond the classical ansatz \eqref{eq:classicalost}, we proceed by including ansatz elements that are created by $k$-body operators $\prod_{i\in S}X_{i}$ acting non-trivially on a subset $S\subset V$ of $|S|=k$ qubits. This leads to a family of ans\"atze, which we refer to as $\mathbb X$-ans\"atze.

\begin{definition}\label{simpleDef} An $\mathbb X$-ansatz is of the form 
\begin{align}
\label{eq:XAnsatz}
U(\bm{\theta})=\prod_{j=1}^{M}e^{-i\theta_{j}H_{j}},~~~~H_{j}=\prod_{i\in S_{j}}X_{i},
\end{align}
where $\mathcal A=\{S_{j}\}$ with $S_{j}\subset V$ being a collection of vertex subsets that the ansatz elements non-trivially act on. 
\end{definition}
We remark that due to commutativity of its elements the set $\mathcal A$ uniquely defines the ansatz \eqref{eq:XAnsatz}, and that \eqref{eq:classical} is contained in the family of $\mathbb{X}$-ans\"atze through $\mathcal A=\Big\{\{1\},\{2\},\cdots,\{n\}\Big\}$. 
As the application of each $H_{j}$ in \eqref{eq:XAnsatz} on $\ket{\bm{0}}$ has the effect of creating a cut, varying over $\bm{\theta}$ in a given $\mathbb X$-ansatz can be interpreted as varying continuously over cuts.  Furthermore, these $\mathbb{X}$-ans\"atze have strong ties to instantaneous quantum polynomial-time (IQP) circuits \cite{doi:10.1098/rspa.2008.0443, fujii2017commuting}. In this regard, it is interesting to note that the ability to calculate $J$ for a given $\mathbb{X}$-ansatz efficiently on a classical computer is related to the ability to determine whether barren plateaus are present \cite{mcclean2018barren, PhysRevA.102.042207, JColes}. For further details, we refer to Appendix \ref{App:Barren}.

Since the ansatz elements in \eqref{eq:XAnsatz} mutually commute, the components of the gradient \eqref{eq:gradientgeneral} take a particularly simple form 
\begin{align}
    \frac{\partial}{\partial \theta_{j}}J(\bm{\theta}) &= -i\bra{\varphi(\bm{\theta})}[H_{p}, H_{j}]\ket{\varphi(\bm{\theta})},
\end{align}
while the elements of the Hessian \eqref{eq:hess} are given by 
\begin{align}
   \frac{\partial^{2}}{\partial \theta_{j}\partial \theta_{k}}J(\bm{\theta}) =
    -\bra{\varphi(\bm{\theta})}[[H_{p},H_{j}],H_{k}]\ket{\varphi(\bm{\theta})} \nonumber ,
\end{align}
which enables the optimization landscape to be analyzed analytically. 
We proceed by fixing $\ket{\phi}=\ket{\bm{0}}$. Writing out the objective function explicitly and utilizing techniques from \cite{Bolis1980} for degenerate critical points, at which the Hessian is not invertible \cite{jain2017non}, allows for establishing the following lemma, whose proof is given in Appendix \ref{App:Lemma1Proof}.

\begin{lemma}\label{theoremNonEigen}
Given an $\mathbb X$-ansatz, any critical point of $J(\bm{\theta})$ not corresponding to an eigenstate of $H_{p}$ is a saddle.
\end{lemma}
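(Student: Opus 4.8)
The plan is to work at a critical point $\bm\theta^{*}$, write $\ket{\psi}\equiv\ket{\varphi(\bm\theta^{*})}$, and expand it in the eigenbasis of $H_{p}$ as $\ket{\psi}=\sum_{z\in\mathcal O}c_{z}\ket{z}$, where $H_{p}\ket{z}=\lambda_{z}\ket{z}$ and $\mathcal O=\mathrm{span}_{\mathbb F_{2}}\{s_{j}\}$ is the orbit of $\ket{\bm 0}$ under the group generated by the $H_{j}$ (here $s_{j}\in\mathbb F_{2}^{n}$ is the indicator of $S_{j}$, so that $H_{j}\ket{z}=\ket{z\oplus s_{j}}$). The hypothesis that $\ket{\psi}$ is not an eigenstate of $H_{p}$ is equivalent to $\mathrm{Var}_{\psi}(H_{p})>0$, i.e.\ there exist $z_{0},z_{1}\in\mathcal O$ in the support of $\ket{\psi}$ with $\lambda_{z_{0}}\neq\lambda_{z_{1}}$. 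The central simplification is that, because the $H_{j}$ mutually commute, moving along a line $\bm\theta^{*}+t\bm v$ factors as $\ket{\varphi}=U(\bm\theta^{*})e^{-itK}\ket{\bm 0}=e^{-itK}\ket{\psi}$ with $K=\sum_{j}v_{j}H_{j}$, so that $J(t)=\bra{\psi}e^{itK}H_{p}e^{-itK}\ket{\psi}$ is a finite trigonometric polynomial in $t$ whose first derivative vanishes at $t=0$. Proving the lemma then amounts to exhibiting, at such a point, \emph{both} a direction along which $J$ increases and one along which it decreases.

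The first step is a second-order analysis. The Hessian quadratic form reads $\bm v^{\mathsf T}\nabla^{2}J\,\bm v=-\bra{\psi}[[H_{p},K],K]\ket{\psi}$, and along a single coordinate it collapses to $J_{j}(t)=\alpha_{j}+2\beta_{j}\cos 2t$ with $\beta_{j}$ real at the critical point, so that each axis has a definite curvature sign and a flat axis ($\beta_{j}=0$) is genuinely constant. The diagonal entries evaluate to $\partial_{j}^{2}J=2\big(\bra{\psi}H_{j}H_{p}H_{j}\ket{\psi}-\bra{\psi}H_{p}\ket{\psi}\big)=2\sum_{z}|c_{z}|^{2}(\lambda_{z\oplus s_{j}}-\lambda_{z})$, i.e.\ twice the change in energy of the distribution $\{|c_{z}|^{2}\}$ under the generator flip $z\mapsto z\oplus s_{j}$. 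I would first dispatch the generic cases at this order: if two coordinates yield curvatures of opposite sign the Hessian is indefinite and $\bm\theta^{*}$ is a saddle; likewise, if some coordinate $j$ has $\partial_{j}^{2}J=0$ but a nonzero off-diagonal partner $\partial_{j}\partial_{k}J\neq 0$, the corresponding $2\times2$ principal block $\left(\begin{smallmatrix}0&c\\ c&d\end{smallmatrix}\right)$ has negative determinant and again certifies a saddle. The role of the positive-variance hypothesis is that the energy gap between $z_{0}$ and $z_{1}$ must surface in at least one such indefinite direction; I would make this quantitative by following the generators along a path from $z_{0}$ to $z_{1}$ in the Cayley graph of $\mathcal O$ and locating a step that changes $\lambda$.

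The remaining, and hardest, case is when the Hessian is merely semidefinite, e.g.\ positive semidefinite with a nontrivial kernel, so that second order is inconclusive --- this is exactly the degenerate situation for which the techniques of Ref.~\cite{Bolis1980} are invoked. Here I would restrict $J$ to lines $\bm v$ lying in $\ker\nabla^{2}J$, along which the Taylor expansion reads $J(t)=J(\bm\theta^{*})+\tfrac{t^{3}}{6}C_{3}(\bm v)+O(t^{4})$, and argue that the positivity of $\mathrm{Var}_{\psi}(H_{p})$ forces a kernel direction with a nonvanishing odd-order leading coefficient; since an odd leading term changes sign with $t$, the point is then neither a local minimum nor a local maximum, hence a saddle. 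The main obstacle is precisely this last step: showing that a non-eigenstate can never be a genuine one-sided extremum to all orders, i.e.\ that the energy variance cannot be hidden entirely in the flat subspace of the Hessian. I expect to handle it by exploiting the explicit trigonometric-polynomial form of $J$ on the torus orbit together with the simultaneously diagonalizable structure of $K$ and the $H_{j}$, converting the statement ``$\ket{\psi}$ is not an eigenstate of $H_{p}$'' into the existence of a finite-order, sign-changing derivative along an admissible direction.
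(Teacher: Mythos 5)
Your structural setup is correct and partially parallels the paper's proof: the expansion in the eigenbasis, the observation that each axis restriction is a pure cosine $\alpha_{j}+\beta_{j}\cos 2t$, and the $2\times 2$-block certificate (zero diagonal entry plus nonzero off-diagonal entry implies indefiniteness) all appear, in essentially the same form, in the paper's case (b)(i). But the two steps you yourself flag as the hard ones are genuinely missing, not merely unpolished. First, the bridge from positive variance to an indefinite second-order direction does not exist as stated. The diagonal curvatures are weighted averages $2\sum_{z}|c_{z}|^{2}(\lambda_{z\oplus s_{j}}-\lambda_{z})$ over the \emph{entire} support of $\ket{\psi}$, so locating one eigenvalue-changing generator step on a Cayley-graph path from $z_{0}$ to $z_{1}$ controls neither the sign of any such average nor any off-diagonal entry: the energy gap can cancel out of every second derivative. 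Nothing in your argument excludes a non-eigenstate critical point with a semidefinite or highly degenerate Hessian, which is exactly why the paper cannot stop at second order and must invoke the degenerate-critical-point machinery of Bolis.

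Second, the degenerate case, which is the real content of the lemma, is left as a hope ("I expect to handle it\ldots"), and your proposed mechanism faces a concrete obstruction you have not addressed. By your own observation, the restriction of $J$ to any coordinate axis is an even function of $t$ about the critical point; equivalently, $\partial^{3}_{kkk}J=-4\,\partial_{k}J=0$ at every critical point. Hence odd-order information can never come from a kernel \emph{axis}; it can only come from mixed derivatives $\partial^{3}_{jkl}J$ with distinct kernel indices, and one must \emph{prove} that such a nonvanishing mixed term (or a higher Taylor form taking negative values on the kernel) exists whenever $\ket{\psi}$ is not an eigenstate. Asserting that positive variance "forces" this is the lemma restated, not an argument. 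The paper closes this hole by a different route: it derives the explicit form $J=\cos(2\theta_{k})S_{k}+\sin(2\theta_{k})T_{k}+V_{k}$, splits critical points exhaustively according to which of $\sin(2\theta_{k}),\cos(2\theta_{k}),S_{k},T_{k}$ vanish, handles the nondegenerate and non-diagonal-Hessian configurations with the block trick, applies Bolis's Proposition 3 together with the explicit mixed-third-derivative computation for degenerate diagonal Hessians, and then shows that the single remaining configuration (every $\theta_{k}$ a multiple of $\pi/4$ with the matching $S_{k}$ or $T_{k}$ vanishing) is precisely an eigenstate. It is this exhaustive algebraic case split that guarantees nothing falls through the cracks; your proposal has no counterpart to it, so as it stands it does not prove the lemma.
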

Given the goal of understanding the optimization landscape critical point structure, and importantly, the presence of local optima in this landscape, the importance of this lemma is that it allows us to focus completely on critical points with parameter configurations $\bm{\theta}_{E}^{*}$ that correspond to eigenstates of $H_{p}$, as all other critical points are saddle points. Since $\bra{z}[[H_{p},H_{j}],H_{k}]\ket{z}=0$ for all $j\neq k$  and all $z\in\{0,1\}^{n}$ we immediately have that at these parameter configurations $\bm{\theta}_{E}^{*}$, the Hessian is diagonal, with elements given by  
\begin{align}
\label{eq:HessianDiag}
\frac{\partial^{2}}{\partial \theta_{j}^{2}}J(\bm{\theta})|_{\bm{\theta}=\bm{\theta}_{E}^{*}}=-2(J(\bm{\theta}_{E}^{*})-\bra{\varphi(\bm{\theta}_{E}^{*})}H_{j}H_{p}H_{j}\ket{\varphi(\bm{\theta}_{E}^{*})}).
\end{align}
Instead of considering $J$ as a function of $\bm{\theta}_{E}^{*}$ we can also think of $J$ as dependent on the set $S_{z}$ that corresponds to the eigenstate created. In this case we write $J\{S_{z}\}$, where we denote by $S_{0}$ and $S_{g}$ the vertex sets corresponding to highest excited and ground states, respectively. The second term in \eqref{eq:HessianDiag} describes the expectation value of $H_{p}$ with respect to an eigenstate $H_{j}\ket{z}$. The corresponding vertex set can be described by the symmetric difference of the set $S_{j}$ corresponding to $H_{j}$, describing which vertices are flipped, and the set $S_{z}$, describing the assignment of ones in $\ket{z}$. More formally, if we introduce the symmetric difference of two sets $A$ and $B$ as 
\begin{align}
A\oplus B=A\cup B- A\cap B,
\end{align}
we can express the diagonal elements of the Hessian as 
\begin{align}\label{eq:HessianDiagForm}
\frac{\partial^{2}}{\partial \theta_{j}^{2}}J(\bm{\theta})|_{\bm{\theta}=\bm{\theta}_{E}^{*}}=2(J\{S_{z}\oplus S_{j}\}-J\{S_{z}\}).
\end{align} 
We observe that for $S_{z}$ to be a local minimum, the condition
\begin{align}\label{eq:condMin}
J\{S_{z}\}\leq J\{S_{z}\oplus S_{j}\},~~~\forall S_{j}\in\mathcal A, 
\end{align} 
has to be satisfied, while for $S_{z}$ to be a local maximum we need 
\begin{align}\label{eq:condMax}
J\{S_{z}\}\geq J\{S_{z}\oplus S_{j}\},~~~\forall S_{j}\in\mathcal A,  
\end{align} 
to hold. Together with Lemma 1 this allows for establishing the following theorem. 

\begin{theorem}\label{corollaryFull}
The optimization landscape associated with an $\mathbb X$-ansatz for which $\mathcal A$ contains all non-symmetric $2^{n-1}-1$ vertex subsets exhibits no local optima. 
\end{theorem}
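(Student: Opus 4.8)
The plan is to combine Lemma~\ref{theoremNonEigen} with the diagonal Hessian formula \eqref{eq:HessianDiagForm} and reduce the entire question to a purely combinatorial statement about symmetric differences. By Lemma~\ref{theoremNonEigen} every critical point not corresponding to an eigenstate of $H_p$ is already a saddle, so it suffices to analyze the eigenstate critical points $\bm{\theta}_E^*$, which I label by the vertex set $S_z$ of the computational basis state $\ket{z}$ they produce. At such a point the Hessian is diagonal with entries $2(J\{S_z\oplus S_j\}-J\{S_z\})$, $S_j\in\mathcal A$, so the point can fail to be a saddle only if all these entries share one sign, i.e. only if \eqref{eq:condMin} or \eqref{eq:condMax} holds. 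The goal is then to show that \eqref{eq:condMin} forces $S_z$ to be a global minimizer of $J$ (the ground state, i.e. MaxCut) and \eqref{eq:condMax} forces it to be a global maximizer (the highest excited state), so that in either case $S_z$ is a global optimum and never a trap in the sense of Definition~\ref{def:trap}.

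First I would pin down the hypothesis: a non-symmetric family of $2^{n-1}-1$ subsets contains exactly one representative of every complement pair $\{T,T^c\}$ with $T\neq\emptyset,V$, since the pair $\{\emptyset,V\}$ (the identity $\prod_{i\in\emptyset}X_i=\mathds 1$ together with the global flip) is the unique pair excluded by the count. The central step is then a covering argument. Because $H_j\ket{z}=\ket{z\oplus S_j}$, I would show that as $S_j$ ranges over $\mathcal A$ the collection $\{S_z\oplus S_j\}$ meets every cut class other than $\{S_z,S_z^c\}$ exactly once: for any target class $\{S_w,S_w^c\}\neq\{S_z,S_z^c\}$ the set $T=S_z\oplus S_w$ satisfies $T\neq\emptyset,V$, so $\mathcal A$ contains either $T$, giving $S_z\oplus T=S_w$, or $T^c$, giving $S_z\oplus T^c=S_w^c$; in both cases $J\{S_z\oplus S_j\}=J\{S_w\}$ by the $\mathbb Z_2$ symmetry $J\{S\}=J\{S^c\}$ of the Ising objective.

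With the covering in hand the conclusion is immediate: the multiset $\{J\{S_z\oplus S_j\}:S_j\in\mathcal A\}$ is exactly the list of cut values of all classes different from that of $S_z$. Hence \eqref{eq:condMin} says $J\{S_z\}$ is no larger than every other cut value, so $S_z$ realizes the global minimum, while \eqref{eq:condMax} says $J\{S_z\}$ is no smaller than every other cut value, so $S_z$ realizes the global maximum; any remaining eigenstate critical point has Hessian entries of both signs and is a saddle. Together with Lemma~\ref{theoremNonEigen} this shows every critical point is either a global optimum or a saddle, proving the absence of local optima.

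The step I expect to be the main obstacle---really the only nonroutine point---is the covering argument, and specifically the bookkeeping forced by the complement symmetry: a single ansatz element $S_j$ need not send $S_z$ to the desired cut $S_w$ but may instead send it to the partner $S_w^c$. The argument only closes because the Ising objective is constant on complement pairs, so reaching either representative suffices, and because excluding precisely the trivial pair $\{\emptyset,V\}$ makes $|\mathcal A|=2^{n-1}-1$ agree with the number of competing cut classes. I would also note that this reasoning is purely combinatorial and does \emph{not} require the competing eigenstates $\ket{S_z\oplus S_j}$ to be reachable from $\ket{\bm 0}$, since $J\{S_z\oplus S_j\}$ enters \eqref{eq:HessianDiagForm} only as a cut value.
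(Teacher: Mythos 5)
Your proposal is correct and follows essentially the same route as the paper's proof: reduce to eigenstate critical points via Lemma~\ref{theoremNonEigen}, use the diagonal Hessian formula \eqref{eq:HessianDiagForm}, and exploit that flipping by the ansatz element $S_z\oplus S_g$ (up to complement) reaches the ground-state cut class, so that conditions \eqref{eq:condMin} and \eqref{eq:condMax} force global optimality. The difference is presentational rather than substantive: the paper argues by contradiction using the single target $S_g$, whereas you prove the stronger covering statement for all cut classes and, helpfully, make explicit the complement bookkeeping ($T$ versus $T^c$ in $\mathcal A$, resolved by the $\mathbb{Z}_2$ symmetry $J\{S\}=J\{S^c\}$) that the paper's one-line choice of $S_j=S_z\oplus S_g$ leaves implicit.
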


\begin{proof} We prove Theorem 1 by contradiction. By Lemma \eqref{theoremNonEigen} we need only consider local optima corresponding to eigenstates. Thus, assume there exists a local minimum with vertex set $S_{z}$. By the assumption that all non-symmetric vertex subsets are contained in $\mathcal A$, we can pick $S_{j}=S_{z}\oplus S_{g}$. Using properties of the symmetric difference we then have 
\begin{align}
J\{S_{z}\}&\leq J\{S_{z}\oplus (S_{z}\oplus S_{g})\}\nonumber\\
&=J\{S_{g}\},
\end{align}
which contradicts that by definition $J\{S_{g}\}<J\{S_{z}\}$. Analogously, a local maximum $S_{z}$ satisfying \eqref{eq:condMax} would contradict $J\{S_{0}\}>J\{S_{z}\}$, which completes the proof.  
\end{proof}

Theorem 1 shows that an $\mathbb X$-ansatz with $M=2^{n-1}-1$ variational parameters yields an optimization landscape whose critical points consists of global optima and saddle points only. It also shows that local optima occurring in the classical ansatz \eqref{eq:classical} vanish when ansatz elements that contain $k$-body entangling operators are added.

While Theorem 1 holds for any graph, it requires exponentially many classical parameters $\theta_{j}$ and is therefore  not scalable. We now consider whether there exist graphs for which an $\mathbb X$-ansatz with polynomially many parameters can be sufficient to obtain an optimization landscape that is free from local optima. 

We begin by considering the example of an Ising chain with nearest-neighbor interactions, described by the Hamiltonian 
\begin{align}
H_{p}=\sum_{j=1}^{n-1}w_{j,j+1}Z_{j}Z_{j+1}.
\end{align}
Depending on the weights $\omega_{j,j+1}$, the classical ansatz \eqref{eq:IsingHam} yields local optima. However, an $\mathbb X$-ansatz described by a path given by $\mathcal A=\Big\{\{1\},\{1,2\},\cdots,\{1,2,\cdots,n-1\}\Big\}$ does allow for turning all local optima into saddle points. To see this, note that from \eqref{eq:HessianDiag} we have that the diagonal elements of the Hessian at the $\ket{z}$ critical points are given by
\begin{align}
\frac{\partial^{2}}{\partial \theta_{j}^{2}}J(\bm{\theta})|_{\bm{\theta}=\bm{\theta}_{E}^{*}}=-4w_{j,j+1}\bra{z}Z_{j}Z_{j+1}\ket{z}.
\end{align}
Together with Lemma 1, we can then conclude that the only critical points at which the Hessian is positive (negative) semidefinite are global minima (maxima). Consequently, for the Ising chain, an optimization landscape free from local optima can be obtained with an $\mathbb X$-ansatz consisting of $n-1$ variational parameters and with a circuit depth polynomial in $n$. To see the latter, we note that in addition to the circuit containing only linearly many ansatz elements, each ansatz element can itself be implemented efficiently with standard universal quantum gate sets \cite{tacchino2020quantum}. Furthermore, it is straightforward to generalize this conclusion to chains with periodic boundary conditions (i.e., to all connected 2-regular graphs). In this latter case, an $\mathbb X$-ansatz described by $n$ paths, each of length $n-1$ but starting at a different vertex, gives an optimization landscape exhibiting global optima and saddle points only, using $\mathcal O(n^2)$ variational parameters. 

These examples illustrate that including quantum features in the ansatz, here in the form of $k$-body entangling operators, can allow for obtaining a favorable landscape while maintaining scalability. We now consider whether scalable $\mathbb X$-ans\"atze can yield a landscape free from local optima for other classes of graphs, where MaxCut is nontrivial. Mathematically, this translates into the question of whether for a given graph $G$, an $\mathbb X$-ansatz with $|\mathcal A|=\text{poly}(n)$ exists so that conditions \eqref{eq:condMin} and \eqref{eq:condMax} can only be satisfied at the global optima.

\begin{theorem}\label{theorem:randomized}
For any graph $G$ and an $\mathbb{X}$-ansatz $\mathcal{A}$ with size $\abs{\mathcal{A}}$, there exists a purely classical algorithm that has the same solution set as the set of local optima satisfying conditions \eqref{eq:condMin} and \eqref{eq:condMax}.
\end{theorem}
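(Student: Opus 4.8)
The plan is to recognize that the local-optimality conditions \eqref{eq:condMin} and \eqref{eq:condMax} are nothing more than the defining conditions of a discrete local search over vertex subsets, and then to take that local search itself as the desired classical algorithm. First I would invoke Lemma \ref{theoremNonEigen} to discard all non-eigenstate critical points, which are saddles, and then read off from the diagonal Hessian \eqref{eq:HessianDiagForm} that an eigenstate critical point $S_z$ can be a local optimum only if it satisfies \eqref{eq:condMin} (for a minimum) or \eqref{eq:condMax} (for a maximum). Collecting these yields the comparison set of the theorem,
\begin{align}
\mathcal{L}=\,&\{S_z:J\{S_z\}\leq J\{S_z\oplus S_j\}~\forall\,S_j\in\mathcal{A}\}\nonumber\\
&\cup\,\{S_z:J\{S_z\}\geq J\{S_z\oplus S_j\}~\forall\,S_j\in\mathcal{A}\},
\end{align}
against which the classical algorithm's solution set is to be matched.

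Next I would define the classical algorithm as a randomized local search on the configuration space $\{0,1\}^n$ of vertex subsets, equipped with the objective $J\{S\}$, the efficiently computable Ising energy of the computational basis state indexed by $S$, and the ansatz-induced neighborhood $N(S)=\{S\oplus S_j:S_j\in\mathcal{A}\}$. Concretely, the algorithm draws a uniformly random starting subset and repeatedly moves to a neighbor in $N(S)$ that strictly decreases $J\{S\}$, halting at a local minimum; an analogous ascending search produces local maxima, and the algorithm's solution set is the union of the halting configurations of the two variants. The central step is to observe that $S$ admits no strictly improving descending move in $N(S)$ if and only if $J\{S\}\leq J\{S\oplus S_j\}$ for all $S_j\in\mathcal{A}$, i.e.\ precisely condition \eqref{eq:condMin}, and symmetrically for \eqref{eq:condMax}; hence the halting configurations are exactly the elements of $\mathcal{L}$. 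Randomizing the starting subset (and the tie-breaking among improving moves) guarantees that every $S\in\mathcal{L}$ arises as an output on some positive-probability run---for instance the run initialized at $S$ itself, which is immediately a fixed point---so that the solution set of the classical algorithm coincides with $\mathcal{L}$.

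I expect the only delicate point to be reconciling the non-strict inequalities in \eqref{eq:condMin} and \eqref{eq:condMax} with the strict-improvement halting rule: one must check that ``no neighbor strictly improves $J$'' is the exact logical negation of ``a strictly improving move exists,'' so that the degenerate case in which some neighbor leaves $J$ unchanged is counted as a local optimum by both the Hessian analysis and the local search without any mismatch. Once this correspondence is established, a short accounting both finishes the argument and supplies its quantitative content: each iteration evaluates $J$ on at most $\abs{\mathcal{A}}$ neighbors at cost $\mathcal{O}(\abs{E})$ apiece, giving a per-step cost of $\mathcal{O}(\abs{\mathcal{A}}\,\abs{E})$, so that whenever $\abs{\mathcal{A}}=\mathrm{poly}(n)$ the classical search reproduces the entire local-optimum structure of the $\mathbb{X}$-ansatz landscape with only polynomial overhead, which is the precise sense in which the ansatz confers no superpolynomial advantage.
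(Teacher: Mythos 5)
Your proposal is correct and takes essentially the same approach as the paper: the paper's classical algorithm is precisely your ansatz-induced local search (flip sets $S_k\in\mathcal{A}$ while the cut value strictly increases, i.e.\ $J$ strictly decreases), whose halting condition is exactly \eqref{eq:condMin}, with the ascending variant covering \eqref{eq:condMax}. Your extra care about random initialization making every local optimum attainable as an output, and the per-step cost accounting, merely make explicit what the paper leaves implicit (the latter appearing there only in the proof of Corollary 2.1).
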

\begin{proof}
Consider the purely classical algorithm for solving \textsf{MaxCut}, shown in Fig. \ref{Fig:theorem2} and outlined in the associated figure caption. The condition that an output of the classical algorithm is a cut in which the \textsf{CutVal} cannot be increased any further by a flip of a single set of vertices $S_{k}$ is precisely condition \eqref{eq:condMin}.
Changing ``increase" to ``decrease" in the algorithm immediately yields those that satisfy \eqref{eq:condMax}, although for the purposes of MaxCut this set is irrelevant.
This completes the proof.
\end{proof}
\begin{figure}[h!]
	\includegraphics[width=0.9\linewidth]{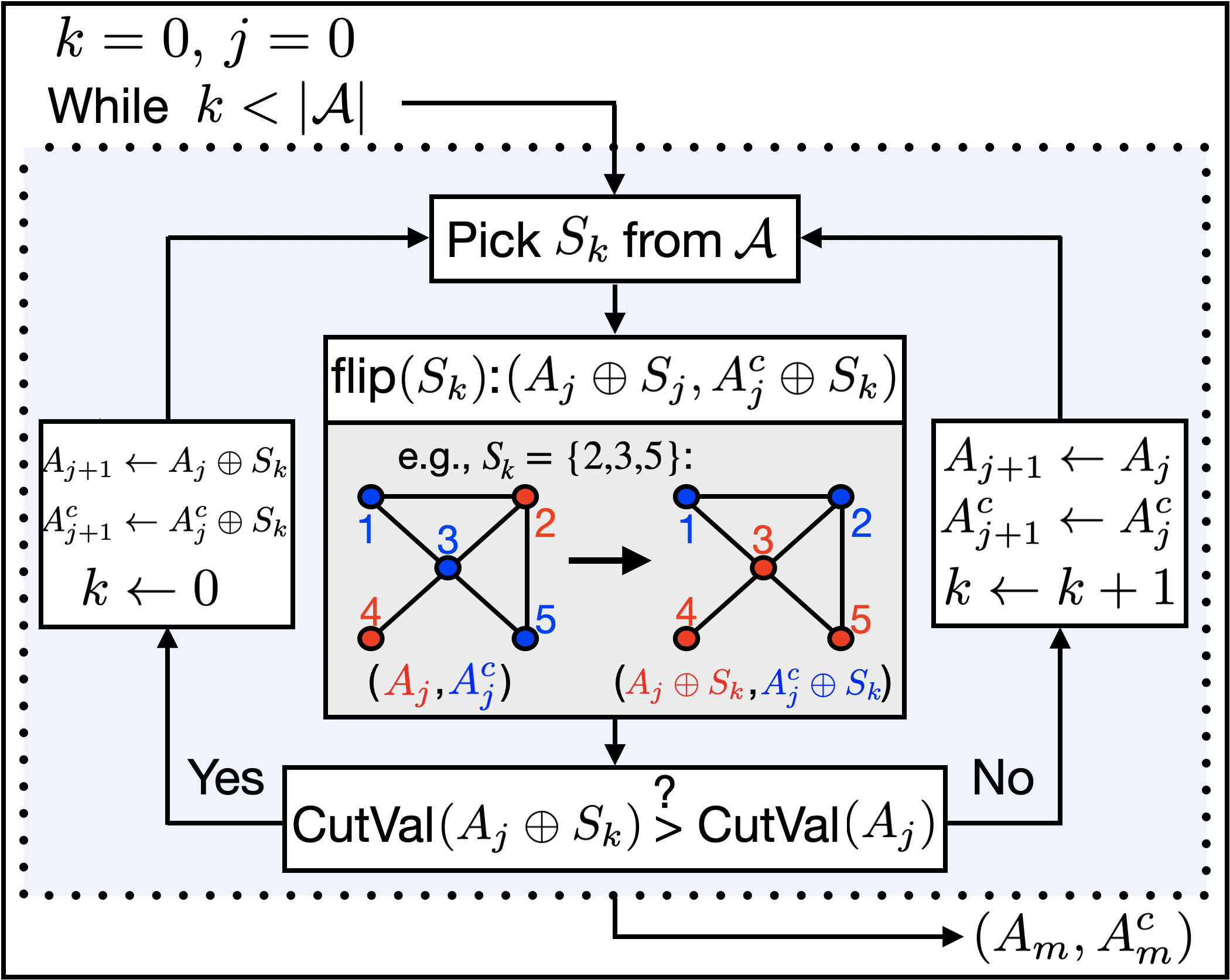}
	\caption{\label{Fig:theorem2} Purely classical algorithm for solving MaxCut. Start with a random bipartition $(A_{0},A^c_{0})$ of the vertices $V$. Then, iteratively construct new bipartitions $(A_{1},A_{1}^c), (A_{2},A_{2}^c), \ldots, (A_{m},A_{m}^c)$. At each iteration $j=1,\cdots,m$, pick an ansatz element $S_{k}\in \mathcal{A}$. Beginning with an ansatz element labeled by $k=0$, implement the $\mathsf{flip}(S_{k})$ operation by flipping the assignments of each vertex in $S_{k}$ to obtain $A_{j}\oplus S_{k}$ and $A_{j}^c\oplus S_{k}$. If $\mathsf{flip}(S_{k})$ increases the \textsf{CutVal} such that $\mathsf{CutVal}(A_{j}\oplus S_{k})>\mathsf{CutVal}(A_{j})$ (or, equivalently, decreases the objective function $J$), then increment $j$ and set $A_{j+1}=A_{j}\oplus S_{k}$ and $A_{j+1}^c=A_{j}^c\oplus S_{k}$ and return to $k=0$. Otherwise, increment $k$ and repeat the \textsf{flip} operation until $\mathsf{CutVal}(A_{j}\oplus S_{k})>\mathsf{CutVal}(A_{j})$ is satisfied. Continue this procedure until a bipartition $(A_{m},A_{m}^c)$ is reached such that $\mathsf{CutVal}(A_{m})$ cannot be increased any further by a single \textsf{flip}.}
\end{figure}

We remark that the manner in which a particular $S_{k}$ is picked at each step in the algorithm shown in Fig. \ref{Fig:theorem2} is irrelevant.
Choosing $S_{k}$ uniformly at random from $\mathcal{A}$ yields a randomized algorithm, whereas iteratively testing each $S_{k}\in \mathcal{A}$ and choosing the largest-increasing $S_{k}$ at each step yields a greedy algorithm akin to the classical 0.5-approximation scheme for MaxCut \cite{Kahruman2007}. 

This immediately yields the following corollary:
\begin{corollary}\label{cor:TrapFreeClassical}
Unless \textsf{P}$=$\textsf{NP}, there does not exist a class of $\mathbb{X}$-ans\"atze with $\abs{\mathcal{A}}=poly(n)$ that yields a landscape free from local optima for generic graphs.
\end{corollary}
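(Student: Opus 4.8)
The plan is to argue by contradiction, using Theorem~\ref{theorem:randomized} to convert a trap-free polynomial-size $\mathbb{X}$-ansatz into a polynomial-time \emph{exact} solver for MaxCut, which would force \textsf{P}$=$\textsf{NP}. First I would suppose such a class of $\mathbb{X}$-ans\"atze exists: for every graph $G$ on $n$ vertices there is an associated $\mathcal{A}$ with $\abs{\mathcal{A}}=\text{poly}(n)$ whose landscape contains no local optima. By Lemma~\ref{theoremNonEigen} the only candidate local optima are the eigenstate configurations, and by the characterization in \eqref{eq:condMin}--\eqref{eq:condMax} the local minima are exactly the vertex sets $S_{z}$ satisfying \eqref{eq:condMin}. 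The phrase ``free from local optima'' therefore means that every $S_{z}$ meeting \eqref{eq:condMin} is a global minimum, i.e.\ encodes a maximum cut.

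Next I would invoke Theorem~\ref{theorem:randomized}: the classical flip-based local-search algorithm of Fig.~\ref{Fig:theorem2}, instantiated with this $\mathcal{A}$, has output set exactly equal to the configurations satisfying \eqref{eq:condMin}. Combined with the trap-free assumption, every output of the algorithm is then a maximum cut, so the algorithm solves MaxCut exactly on generic graphs.

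It then remains to verify that this algorithm runs in polynomial time, which I expect to be the main technical point, since the rest follows essentially immediately from Theorem~\ref{theorem:randomized}. Because unweighted MaxCut is already \textsf{NP}-hard, it suffices to restrict to integer (indeed unit) edge weights. Each successful $\mathsf{flip}$ strictly increases $\mathsf{CutVal}$ by at least one, while $\mathsf{CutVal}$ is bounded above by $\abs{E}=\mathcal{O}(n^{2})$; hence at most $\mathcal{O}(n^{2})$ improving flips can occur before termination. Between improving flips, testing all $\abs{\mathcal{A}}=\text{poly}(n)$ ansatz elements costs $\text{poly}(n)$, because the change in cut value induced by flipping $S_{k}$ depends only on the edges incident to $S_{k}$ and is computable in $\text{poly}(n)$ time. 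The total running time is therefore $\text{poly}(n)$, yielding a polynomial-time algorithm that always returns a maximum cut of a generic graph and contradicting the \textsf{NP}-hardness of MaxCut unless \textsf{P}$=$\textsf{NP}.

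The one subtlety I would flag is uniformity: for the conclusion to read \textsf{P}$=$\textsf{NP} rather than merely MaxCut$\,\in\,$\textsf{P}$/$poly, the ansatz $\mathcal{A}$ for each $G$ must itself be constructible in polynomial time. I would take this as part of what it means for a scalable ``class of $\mathbb{X}$-ans\"atze'' to be usable within a VQA, consistent with the scalability criterion imposed on $M$ in the Preliminaries; alternatively, one could state the weaker non-uniform conclusion explicitly. With this understood, the contradiction completes the proof.
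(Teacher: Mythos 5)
Your proposal is correct and follows essentially the same route as the paper's own proof: both convert a hypothetical trap-free $\mathbb{X}$-ansatz with $\abs{\mathcal{A}}=\text{poly}(n)$ into the classical flip-based local search of Theorem~\ref{theorem:randomized}, bound its running time on unweighted graphs by the fact that each improving flip raises $\mathsf{CutVal}$ by at least one (the paper's $\abs{\mathcal{A}}\cdot\abs{E}$ step count is your argument phrased as a single product), and conclude from the \textsf{NP}-hardness of unweighted MaxCut. Your closing remark on uniformity---that $\mathcal{A}$ must itself be constructible in polynomial time to get \textsf{P}$=$\textsf{NP} rather than MaxCut$\,\in\,$\textsf{P}$/$poly---is a legitimate refinement that the paper leaves implicit in its notion of a scalable ansatz class.
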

\begin{proof}
If there exists a landscape free from local optima, then the only cut that satisfies either \eqref{eq:condMin} or \eqref{eq:condMax} is the one corresponding to the global minimum (namely, the maximum cut) or global maximum. 
By Theorem \eqref{theorem:randomized}, if there exists an $\mathbb{X}$-ansatz $\mathcal{A}$ with size $\abs{\mathcal{A}}=poly(n)$, there also exists a purely classical greedy algorithm that would always converge to the global minimum as well.
Notice that for an unweighted graph the maximum cut value is at most $\abs{E}$, so the purely classical algorithm converges in at most $\abs{\mathcal{A}}\cdot \abs{E}$ steps, thus solving unweighted MaxCut with polynomial cost.
Since unweighted MaxCut is \textsf{NP}-hard, and so is MaxCut for arbitrary graphs (see \cite{Kahruman2007} for the extension of greedy algorithms to weighted graphs), this shows that unless \textsf{P}$=$\textsf{NP}, there does not exist a class of $\mathbb{X}$-ans\"atze with $\abs{\mathcal{A}}=poly(n)$ that generically yields landscapes consisting of global optima and saddle points only. 
\end{proof}

This relation can be extended even further, by considering approximation schemes for MaxCut, aimed at achieving an approximation ratio $\alpha = \textsf{CutVal}(S)/\textsf{MaxCut}(G)$ for some $S\subset V$. Then, assuming the existence of an algorithm that can escape saddle points:

\begin{corollary}\label{cor:approxSaddle}
Given a fixed approximation ratio $\alpha$, and any $\mathbb{X}$-ansatz with $|\mathcal{A}|=poly(n)$, even an algorithm that can escape saddle points cannot provide a superpolynomial advantage over a purely-classical $\alpha$-approximation scheme for MaxCut. 
\end{corollary}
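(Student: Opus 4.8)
The plan is to reduce this corollary to Theorem~\ref{theorem:randomized} in essentially the same manner as Corollary~\ref{cor:TrapFreeClassical}, but tracking the attained cut value rather than only the global optimum. First I would pin down what it means for a saddle-escaping optimizer to ``achieve'' ratio $\alpha$ on the $\mathbb{X}$-ansatz landscape. By Lemma~\ref{theoremNonEigen}, every critical point that does not correspond to an eigenstate of $H_p$ is a saddle; hence an optimizer that provably escapes saddles can only terminate at an eigenstate critical point that is a genuine local optimum, i.e.\ a configuration $\bm{\theta}_E^*$ whose associated vertex set $S_z$ satisfies the semidefiniteness condition~\eqref{eq:condMin}. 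Consequently the set of possible outputs of the quantum procedure is exactly the set of local minima characterized by~\eqref{eq:condMin}.

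The key step is then to invoke Theorem~\ref{theorem:randomized}: this solution set coincides with the set of termination points of the classical flip-based local search of Fig.~\ref{Fig:theorem2} run with the same collection $\mathcal{A}$ of flip sets. Since the two solution sets are identical, any approximation guarantee enjoyed by the saddle-escaping algorithm---whether a worst-case bound over all reachable local optima, or an expected-value bound under a randomized choice of initial bipartition and of the $S_k$ selected at each step---is automatically inherited by the classical algorithm, because the correspondence in Theorem~\ref{theorem:randomized} is agnostic to how each $S_k$ is picked and so both methods can be made to sample from the same output distribution. Thus whatever cut value, and hence whatever ratio $\alpha=\mathsf{CutVal}(S)/\mathsf{MaxCut}(G)$, the quantum method attains upon convergence, the classical method attains as well.

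It remains to argue that the classical method does so without superpolynomial overhead. Here I would reuse the counting already established in the proof of Corollary~\ref{cor:TrapFreeClassical}: each accepted \textsf{flip} strictly increases $\mathsf{CutVal}$, so for unweighted graphs, where $\mathsf{CutVal}\le\abs{E}$ is integer-valued, the search halts after at most $\abs{\mathcal{A}}\cdot\abs{E}$ flip evaluations, each of which is classically cheap; the extension to weighted graphs follows from the greedy analysis of~\cite{Kahruman2007}. Because a scalable $\mathbb{X}$-ansatz is assumed to have $\abs{\mathcal{A}}=\mathrm{poly}(n)$, the classical runtime is polynomial, matching the (also polynomial) per-iteration quantum resources, so no $\mathbb{X}$-ansatz paired with a saddle-escaping optimizer can beat a classical $\alpha$-approximation scheme by a superpolynomial factor.

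I would expect the main obstacle to be the middle step: rigorously transferring the approximation guarantee rather than merely the support of the output set. Worst-case ratios transfer immediately from the set equality, but if $\alpha$ is only guaranteed in expectation or with high probability, one must verify that the classical local search can reproduce the same output distribution, which relies precisely on the fact that the correspondence in Theorem~\ref{theorem:randomized} holds for every rule of selecting $S_k$, including uniformly at random. A secondary subtlety, inherited from Corollary~\ref{cor:TrapFreeClassical}, is bounding the number of improving flips for weighted graphs, for which I would lean on the cited greedy analysis rather than reprove it.
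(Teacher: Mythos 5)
Your overall skeleton matches the paper's: outputs of a saddle-escaping optimizer are characterized (via Lemma \ref{theoremNonEigen}) as the eigenstate local optima satisfying \eqref{eq:condMin} or \eqref{eq:condMax}, Theorem \ref{theorem:randomized} identifies this set with the termination set of the classical flip search of Fig. \ref{Fig:theorem2}, and the indifference of both procedures to which element of that set they land on transfers any provable approximation guarantee from one to the other. Up to that point your argument is, if anything, more careful than the paper's (e.g., your remark about matching output distributions by exploiting the freedom in how $S_{k}$ is selected).

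The divergence, and the genuine gap, is in the runtime step. You bound the time for the classical search to reach an \emph{exact} local optimum, reusing the $\abs{\mathcal{A}}\cdot\abs{E}$ count from Corollary \ref{cor:TrapFreeClassical}. That count is valid only for unweighted (integer-valued) cut values; the corollary at hand concerns general weighted MaxCut, where the number of improving flips to exact local optimality admits no polynomial bound in general (this is the well-known pathology of weighted local search), and the greedy analysis of \cite{Kahruman2007} that you lean on concerns a constructive greedy algorithm, not the length of improving-flip sequences, so it does not close this hole. The paper's proof avoids the issue entirely by using the hypothesis that $\alpha$ is \emph{fixed}: the classical scheme need not run to local optimality at all, only until ratio $\alpha$ is attained, and the paper bounds this by $\frac{2\cdot\abs{\mathcal{A}}}{1-\alpha}$ steps, which is polynomial in $n$ for fixed $\alpha$ and independent of the edge weights. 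In other words, the one ingredient your proposal is missing is precisely the ingredient that makes the ``fixed approximation ratio $\alpha$'' hypothesis in the statement do real work; without it, your argument establishes the corollary only for unweighted instances.
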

\begin{proof}
Analogously as to the proofs of Theorem 2 and Corollary 2.1, the key idea is that the solution set among the local optima that satisfy conditions \eqref{eq:condMin} or \eqref{eq:condMax} is indifferent to each such potential solution in a gradient algorithm, even one that can escape saddle points.
Similarly, the classical approximation scheme presented in the proof of Theorem \eqref{theorem:randomized} is also indifferent to each of the potential solutions, and for a fixed $\alpha$ it converges in at most $\frac{2\cdot \abs{\mathcal{A}}}{1-\alpha}$ steps, which is polynomial in $n$ if $\abs{\mathcal{A}}=poly(n)$.
As such, any provable approximation ratios $\alpha$ given by each of the algorithms are identical. Consequently, no superpolynomial advantage exists.
\end{proof}

This result begs the question: what is required for a quantum advantage in this setting? In the following section, we assess the role of non-commutativity through a series of numerical experiments.

\section{\label{sec:numerics}Numerical experiments}

To systematically assess how including $k$-body elements in an ansatz affects the structure of the underlying optimization landscape, we define the $k$-body depth of an $\mathbb X$-ansatz as $D=\max_{S_{j}\in \mathcal A}|S_{j}|$. 
Here, we remark that while ``width" or ``span" may be more apt descriptors of this quantity with respect to ansatz elements acting on a graph, we choose to refer to it as a ``depth" in order to serve as a proxy for the more common circuit depth complexity, as depicted in Figure \eqref{Fig:Xcirc}(a).
Given this, we first aim to explore how the structure of the optimization landscape changes when the $k$-body depth is successively increased, towards an ansatz containing all $k$-body operators, which according to Theorem \eqref{corollaryFull} yields a landscape free from local optima. We then proceed by introducing non-commutative elements in the $\mathbb X$-ansatz, and investigate whether such extensions exhibit faster convergence to better approximation ratios. Finally, we compare the $\mathbb X$-ansatz and its non-commutative variants against QAOA.

We focus our numerical analyses on complete graphs $K_{n}$ with random positive edge weights, for which MaxCut is known to be \textsf{NP}-hard \cite{Karp1972}.  
In each numerical experiment we solve \eqref{eq:maxcutquant} for $K_{n}$ with $w_{a,b}$ chosen uniformly randomly from $[0,5]$, and utilize the first-order gradient Broyden–Fletcher–Goldfarb–Shanno (BFGS) algorithm with a randomly chosen initial parameter configuration $\bm{\theta}$. Details regarding the hyperparameters used can be found in Appendix \ref{App:ClassicalNumerics}. In each run we calculate the approximation ratio $\alpha$, given as the ratio between the actual MaxCut value $\mathsf{MaxCut}(K_{n})$, obtained from exact diagonalization of $H_{p}$, and the cut value obtained from solving \eqref{eq:maxcutquant} using BFGS. The curves in the figures below show the average taken over 100 realizations and the shaded areas show the corresponding standard deviation.

\subsection{Dependence on the $k$-body depth}
We begin by investigating how the approximation ratio that is obtained changes when the $k$-body depth in the $\mathbb X$-ansatz is increased. In particular, we consider the $\mathbb X$-ansatz schematically represented in Fig. \ref{Fig:Xcirc}(a) where increasing the $k$-body depth by one is achieved by adding $n \choose k$ $k$-body operators. 
\begin{figure}[!h]
	\includegraphics[width=0.90\linewidth]{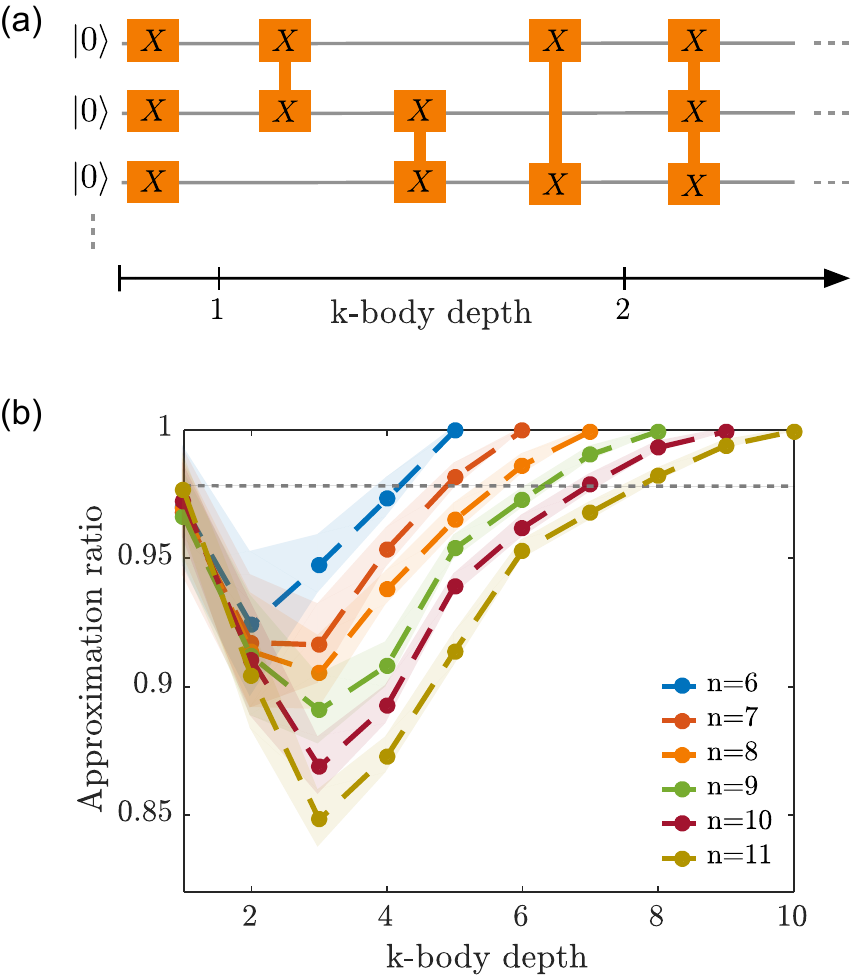}
	\caption{The performance of the $\mathbb X$-ansatz, whose circuit diagram is shown in (a), for solving MaxCut on complete graphs is shown in (b). In (a) boxes represent variational ansatz elements. Connected boxes represent entangling ansatz elements that are generated by $k$-body $X$ operators acting non-trivially on $k$ ``boxed'' qubits. In (b) the approximation ratio is shown as a function of the $k$-body depth for different problem sizes $n$. The circles correspond to the average taken over 100 randomly chosen graph instances and BFGS initial conditions. The shaded areas show the corresponding standard deviations. The grey dashed line indicates the threshold for when better approximation ratios than the classical ansatz \eqref{eq:classical} are achieved.}   
	\label{Fig:Xcirc}
\end{figure}
That is, for a fixed $k$-body depth $D$ the ansatz consists of $M=\sum_{k=1}^{D} {n\choose k}$ ansatz elements, noting that $D=1$ corresponds to the classical ansatz \eqref{eq:classical} with $M=n$ local rotations. We further note that for $D=n$ we have $M=2^{n}-1$, so that for $D=n-1$ the number of variational parameters $M=2^{n}-2$ scales exponentially in the number of qubits $n$. The approximation ratio as a function of the $k$-body depth is shown in Fig. \ref{Fig:Xcirc}(b) for different $n$. We first observe that the classical ansatz performs very well as approximation ratios $\geq 0.95$ are achieved. Numerical simulations shown in Appendix \ref{App:ClassicalNumerics} indeed confirm that solving \eqref{eq:maxcutquant} using BFGS for graphs with a large vertex degree yield approximation ratios that are slightly better than the ones obtained from the GW algorithm. However, from Fig. \ref{Fig:Xcirc}(b) we also see that adding quantumness in the form of $2$-body entangling terms does not increase the approximation ratio. Instead, performance drops until a sufficiently large $k$-body depth is reached (here $>3$). This behavior suggests that simply adding additional $2$-body terms to the classical ansatz does not automatically make the optimization landscape structure more favorable, as a drop in the approximation ratio indicates that randomly initialized first-order gradient algorithms are in this case more prone to get stuck in local optima or saddle points. 
However, increasing the $k$-body depth even further allows for obtaining better approximation ratios than the classical ansatz, which is indicated by a grey dashed line. We further observe in Fig. \ref{Fig:Xcirc}(b) that when $D=n-1$, i.e, exponentially many variational parameters are used, average approximation ratios of $\geq0.99$ with standard deviations  $\leq 10^{-6}$ are achieved. This behavior is in line with Theorem 1, as an $\mathbb X$-circuit with exponentially many parameters yields a landscape free from local optima, while the appearance of saddle points does not seem to affect the performance of BFGS. However, we remark here that according to Theorem 1, a landscape not exhibiting local optima is already obtained at a lower $k$-body depth $D > n/2$, as all non-symmetric vertex subsets are then contained in $\mathcal A$. It is interesting to note that in this case, smaller approximation ratios correspond to runs converging to degenerate saddle points. 

One way to justify this behavior disappearing when we increase the $k$-body depth from $n/2$ to $n-1$ is to consider that at depth $n-1$, each parameter is effectively included twice (for any ansatz element $S_{k}$ with parameter $\theta_{k}$, there exists its complementary element $S_{j}=S_{k}^{c}$ for some $j$, with parameter $\theta_{j}$). Since the critical point conditions are equivalent for $\theta_{k}$ and $\theta_{j}$, the probability of all pairs satisfying the conditions (and thus yielding a critical point) at depth $n-1$ is squared relative to the probability that each of the $2^{n-1}-1$ values at depth $n/2$ satisfies them.
Thus, the probability of observing this phenomenon at depth $n-1$ is significantly lower than at depth $n/2$. 
  
The results shown in Fig. \ref{Fig:Xcirc}(b) suggest that the $\mathbb X$-ansatz with sufficiently many $k$-body terms performs better than the classical ansatz \eqref{eq:classical}. However, according to Corollary \eqref{cor:approxSaddle}, there is also a purely classical strategy to achieve the same approximation ratios. A natural next question to consider is whether introducing non-commutativity in the $\mathbb X$-ansatz will improve performance.

\subsection{Assessing the role of non-commutativity}
We proceed by introducing non-commutative ansatz elements into the $\mathbb X$-ansatz. We consider ans\"atze of the form 
\begin{align}
U(\bm{\theta})=\prod_{j}e^{-i\tilde{\theta}_{j}\tilde{H}_{j}}e^{-i\theta_{j}H_{j}},
\end{align}
where $H_{j}\in \left\{\prod_{i\in S}X_{i}\,|\,S\in \mathcal A\right\}$ are the generators of the $\mathbb X$-ansatz elements determined by $\mathcal A$ and non-commutativity is introduced through the Hermitian operators $\tilde{H}_{j}$. As schematically represented in Fig. \ref{Fig:NonCommute}(a) and (b), we consider the case where the $k$-body depth is increased by including ansatz elements generated by Pauli $Z$ operators between the elements of the $\mathbb X$-ansatz in the last section, which we refer to as $\mathbb {XZ}$-ans\"atze. 
\begin{figure}[h!]
	\includegraphics[width=0.90\linewidth]{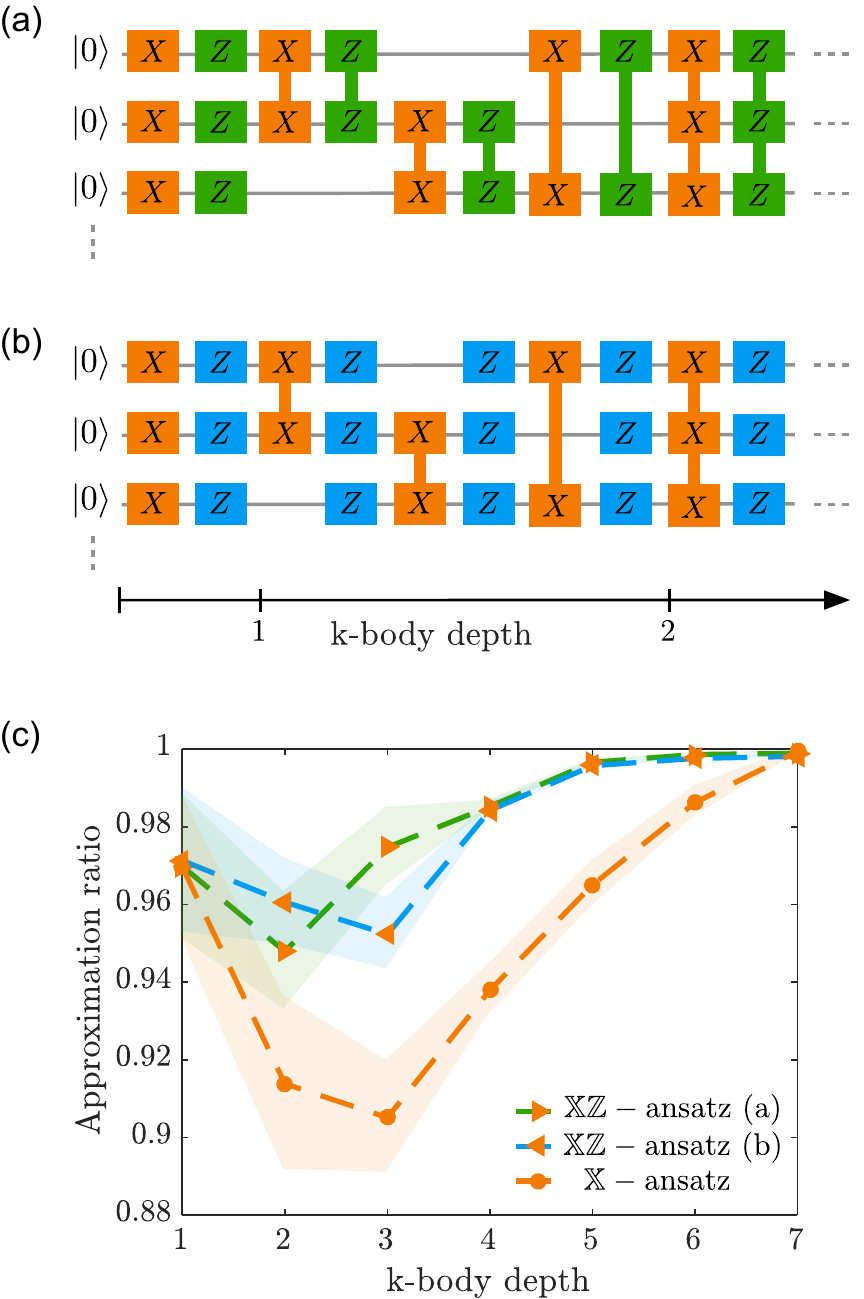}
	\caption{The role of non-commutativity is assessed by introducing variational ansatz elements generated by Pauli $Z$ operators into the $\mathbb X$-ansatz. This is achieved by alternating between ansatz elements generated by $k$-body $X$ (orange) and (a) $k$-body $Z$ (green) operators and (b) single-qubit $Z$ operators (blue). In (c), the approximation ratio is shown as a function of the $k$-body depth for $n=8$ qubits. The circles/triangles correspond to the average taken over 100 randomly chosen complete graphs and BFGS initial conditions. The shaded areas show the corresponding standard deviations.}   
	\label{Fig:NonCommute}
\end{figure}
We treat two different cases. Namely, in (a)  we consider $\tilde{H}_{k}\in \left\{\prod_{i\in S}Z_{i}\,|\,S\in \mathcal A\right\}$ while in (b) $\tilde{H}_{k}=\sum_{i=1}^{n}Z_{i}$ for all $k$. As such, now  
the number of variational parameters is increased by $2{n\choose k}$ when the $k$-body depth is increased by one. The results are shown for $n=8$ in Fig. \ref{Fig:NonCommute}(c).

We first observe that both $\mathbb {XZ}$-ans\"atze yield faster convergence than the $\mathbb X$-ansatz, which is not surprising as the number of variational parameters has doubled. 
However, it is interesting to observe that the performance between (a) and (b) differs only slightly; at a $k$-body depth of $4$, both $\mathbb {XZ}$-ans\"atze yield approximation ratios of $\approx 0.98$ while the $\mathbb X$-ansatz achieves $\approx 0.94$, which is even lower than for the classical ansatz \eqref{eq:classical}.  

Another way of introducing non-commutativity is repeating a given structure, which consists of ansatz elements that do not mutually commute. A standard example for such an ansatz is QAOA \cite{2014arXiv1411.4028F}.

\subsection{Comparison with QAOA and initial state dependence}
QAOA is a VQA designed to solve combinatorial optimization problems that was developed in 2014 \cite{2014arXiv1411.4028F} and has since inspired many works \cite{otterbach2017unsupervised, qiang2018large, willsch2020benchmarking,abrams2019implementation,bengtsson2019quantum,Pagano2020,Harrigan2021}. QAOA aims to generate approximate solutions to combinatorial optimization problems such as MaxCut through an ansatz of the form 
\begin{align}
\label{eq:QAOA}
U(\bm{\theta})=\prod_{j}e^{-i\tilde{\theta}_{j}H_{p}}e^{-i\theta_{j}H_{X}}, 
\end{align}
where $H_{X}=\sum_{j=1}^{n}X_{j}$. In QAOA the initial state is given by $\ket{\phi}=\ket{\bm{+}}$ where $\ket{\bm{+}}=\ket{+}\otimes\cdots\otimes \ket{+}$ with $\ket{+}$ being an eigenstate of $X$, so that for sufficiently many alternations between $H_{p}$ and $H_{X}$ as in \eqref{eq:QAOA} a ground state of $H_{p}$ is reachable. In contrast to the $\mathbb X$- and $\mathbb {XZ}$-ans\"atze in the last section, for which the ground state is reachable at a $k$-body depth of $D=1$, a sufficiently large circuit depth is needed in QAOA before the ground state can be created.

\begin{figure}[h!]
	\includegraphics[width=0.8\linewidth]{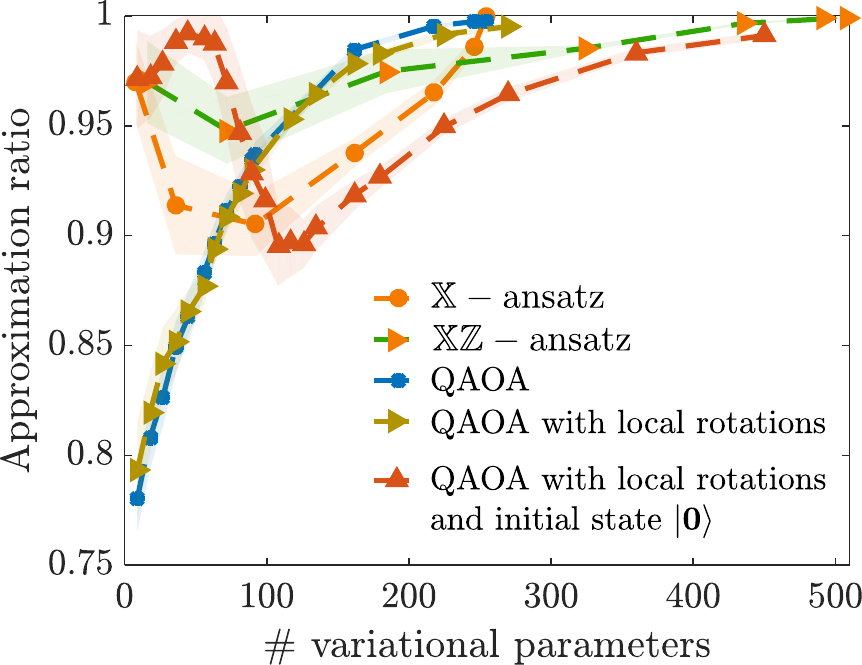}
	\caption{A comparison of different versions of QAOA with the $\mathbb X$- and $\mathbb {XZ}$-ans\"atze for $n=8$ qubits is shown. The circles/triangles correspond to the average approximation ratio taken over 100 randomly chosen complete graphs and BFGS initial conditions. The shaded areas show the corresponding standard deviations.}    
	\label{fig:QAOAcomp}
\end{figure}

In Fig. \ref{fig:QAOAcomp} we compare the approximation ratios obtained from QAOA (blue) with the $\mathbb X$- and $\mathbb {XZ}$-ansatz (orange and orange/green, respectively) represented in Fig. \ref{Fig:Xcirc}(a) and Fig. \ref{Fig:NonCommute}(a) as a function of the number of variational parameters for $n=8$ qubits. We note that since Fig. \ref{fig:QAOAcomp} compares the performance obtained using the $\mathbb{X}$- and $\mathbb{XZ}$-ans\"atze against the performance of QAOA and its variants, we plot the results as a function of the number of variational parameters, rather than the $k$-body depth, as the latter is not a relevant quantity in QAOA. However, for the former $\mathbb{X}$- and $\mathbb{XZ}$-ans\"atze, the number of variational parameters serves as a proxy for the $k$-body depth, as per Figs. \ref{Fig:Xcirc} and \ref{Fig:NonCommute}.

We first observe that for a small number of variational parameters, QAOA performs worse than the $\mathbb X$- and the $\mathbb {XZ}$-ans\"atze. This behavior can be traced back to the fact that for a small number of variational parameters (and accordingly, a shallow circuit), the ground state of $H_{p}$ may not be reachable through the ansatz \eqref{eq:QAOA}. In contrast, as the classical ansatz \eqref{eq:classical} is contained in the $\mathbb X$- and the $\mathbb {XZ}$-ans\"atze, for $M=n$ variational parameters the ground state is already reachable for a small number of classical parameters, which explains why the $\mathbb X$- and the $\mathbb {XZ}$-ans\"atze perform better in this regime than QAOA. However, we also see from Fig. \ref{fig:QAOAcomp} that with increasing numbers of variational parameters, QAOA outperforms the $\mathbb X$- and the $\mathbb {XZ}$-ansatz, as for QAOA a better approximation ratio can be obtained with fewer variational parameters.   
One may then wonder whether we can combine these favorable aspects of QAOA and the $\mathbb X$- and the $\mathbb {XZ}$-ans\"atze, e.g., by modifying QAOA such that for a few variational parameters high approximation ratios $\geq 0.95$ are obtained, while increasing the number of variational parameters continuously improves the approximation ratios, or conversely, whether it is possible to avoid the drop in the approximation ratios observed in Fig. \ref{Fig:Xcirc}(b) and Fig. \ref{Fig:NonCommute}(c) when the $k$-body depth is increased.

A natural modification of QAOA is to incorporate $n$ independent local $X$ rotations at each layer \cite{Hadfield2019}, such that the classical ansatz \eqref{eq:classical} is now contained in the ansatz \eqref{eq:QAOA}. This modification is well-motivated, due to the fact that even on non-interacting spin problems, conventional QAOA (i.e., using only global $X$ rotations) can fail to converge \cite{mcclean2020low}. However, we note that this does not automatically guarantee that the ground state is then reachable for a smaller number of variational parameters, as reachability also depends on the initial state $\ket{\phi}$. And indeed, from the olive-green curve in Fig. \ref{fig:QAOAcomp}, we see that a modification of QAOA to include local rotations while having $\ket{\phi}=\ket{\bm{+}}$ does not substantially change the convergence behavior, suggesting that a lack of reachability may affect the performance in this case. In comparison, if additionally the initial state is changed to $\ket{\phi}=\ket{\bm{0}}$ (red), then $M=n$ variational parameters do allow for retaining high approximation ratios $\geq 0.96$. Furthermore, increasing $M$ allows for increasing the approximation ratio to $\approx 0.99$ at $M\approx 45$. These results suggest that in addition to the incorporation of non-commutativity into an ansatz, the ability to guarantee reachability of the set of solution states with shallow circuits can enhance the performance of variational quantum algorithms. However, we do note that continuing to increase $M$ causes the approximation ratio to drop again, indicating that further work is needed to fully understand the tradeoffs in how these aspects of algorithm design impact performance. We also note that the numerical simulations in Fig. \ref{fig:QAOAcomp} suggest that the modified version of QAOA (red) and the $\mathbb{XZ}$-ansatz require $M\approx 2^{n+1}$ variational parameters in order to achieve approximation ratios asymptotically approaching $1$.

\section{Conclusions}
The successful implementation of VQAs rests on the ability to solve the underlying optimization problem, which in turn is determined by the structure of the corresponding optimization landscape. The landscape structure depends on the problem under consideration, and on the parameterized quantum circuit ansatz used to evaluate the associated objective function. In this work, we have focused on VQAs for solving the combinatorial optimization problem MaxCut, and studied the interplay between the graph type, ansatz, and critical point structure of the optimization landscape in this setting. In particular, we introduced a family of ans\"atze consisting of mutually commuting elements generated by $k$-body Pauli $X$ operators, which we termed $\mathbb X$-ans\"atze. We proved that for generic graphs, an ansatz from this family containing exponentially many variational parameters yields an optimization landscape that consists of global optima and saddle points only. Next, we considered examples for which an ansatz from this family with polynomially many variational parameters yields a landscape free from local optima, but for which MaxCut is also efficiently solvable classically. We then showed that for a given graph, a polynomially sized ansatz from the considered family exists if and only if there exists a purely classical algorithm that allows for solving MaxCut efficiently. As a consequence, we concluded that this ansatz family cannot offer a superpolynomial advantage over purely classical schemes.

We went on to numerically study whether introducing non-commutative ansatz elements could improve performance. For the $\mathbb X$-ansatz and its variants, we found that the addition of $k$-body terms did not improve the landscape structure in a manner that yields monotonically improving approximation ratios. However, for sufficiently high $k$-body depth, non-commutative versions of the $\mathbb X$-ansatz did perform better than their commutative counterparts. In total, we believe that the $\mathbb X$-ansatz family can be useful as a baseline, upon which new ideas for further performance improvement can be studied. We note that for relatively shallow circuit depth, the best performance overall was obtained through a modified version of QAOA, motivated by our prior findings, that includes independent local rotations.  

More research is needed to determine how quantum resources contribute to favorable landscape properties and if effective ans\"atze can be identified in a problem-dependent manner in the future. To this end, it is interesting to note that a ``classical'' ansatz, consisting of only local rotations, in combination with a first-order gradient algorithm, achieves remarkably high $\approx 0.95$ approximation ratios, and performs slightly better on average than the purely classical Goemans and Williamson MaxCut approximation algorithm for graphs with sufficiently high vertex degree. Due to the provable existence of local optima when utilizing the former classical ansatz, this suggests that the optimization landscape could be favorable in the sense that gradient algorithms converge to critical points with high associated approximation ratios.

In order to improve beyond these results, it would be desirable to steadily remove local optima from the landscape, while retaining only local optima with high corresponding approximation ratios. Furthermore, an equally important goal would be to widening the basin of attraction of global optima and reducing those of the remaining local optima \cite{mcclean2020low}. In addition, analyzing the curvature of the optimization landscape can reveal information about an algorithm's robustness to noise \cite{chakrabarti2007quantum}, suggesting that it could be desirable to design future ans\"atze to take this into account in order to enhance the quality of NISQ implementations.

Future studies may benefit from using the former ``classical'' ansatz as a starting point, from which additional ``quantum'' features are added, i.e., through the design of adaptive ans\"atze \cite{zhu2020adaptive,Grimsley2019_Adapt,Tang2020_Qubit, Zhang2020_Mutual,magann2021feedbackbased}, e.g. seeded from local rotations and then grown by incorporating gates that introduce non-commutativity and entanglement with the goal of continuously improving the landscape structure such that higher approximation ratios are attainable. Such procedures could additionally be guided by characterizing the geometry \cite{haug2021capacity} and entangling power \cite{PRXQuantum.1.020319} of the parameterized circuit.

The ideal outcome would be the development of VQAs that are \emph{scalable}, and have optimization landscapes with a favorable critical point structure. The most favorable landscapes would be free of local optima, and also free from certain types of saddle points. That is, while in this work we did not distinguish between classes of saddle points, the convergence of classical optimization routines can be significantly hindered by the presence of degenerate saddle points, as opposed to more favorable ``strict'' saddle points. Furthermore, the most favorable landscapes would also be free of barren plateaus  \cite{mcclean2018barren, PhysRevA.102.042207, JColes} (see also Appendix  \ref{App:Barren}), where the gradient becomes exponentially small, thereby impeding the progress of optimization algorithms. In fact, taken together, ans\"atze with polynomially many variational parameters lacking in barren plateaus, degenerate saddles, and local optima would allow a global optimum to be found efficiently using first-order gradient methods \cite{pmlr-v40-Ge15, levy2016power, pmlr-v70-jin17a}. However, for problems like MaxCut, we do not expect that such efficient solutions will be feasible in general, as we do not anticipate that VQAs will be able to solve \textsf{NP}-hard problems efficiently. Nevertheless, the achievement of any subset of these goals for scalable VQA ans\"atze would be highly desirable. Looking ahead, we hope that the strategies outlined above will allow for systematically assessing how to improve VQA performance across a range of applications.

\section*{Acknowledgements} 
We thank B. Hanin, J. McClean, M. McConnell, and M. Zhandry for valuable conversations and insights.
C.A. acknowledges support from the ARO (Grant No. W911NF-19-1-0382). A.M. acknowledges support from the U.S. Department of Energy, Office of Science, Office of Advanced Scientific Computing Research, Department of Energy Computational Science Graduate Fellowship under Award Number DE-FG02-97ER25308. H. R. acknowledges support from the ARO (Grant No. W911NF-19-1-0382) for the theoretical aspects of the research and from the DOE STTR (Grant No. DE-SC0020618) for the algorithmic implications of the research.

The authors are pleased to acknowledge that the work reported on in this paper was substantially performed using the Princeton Research Computing resources at Princeton University which is consortium of groups led by the Princeton Institute for Computational Science and Engineering (PICSciE) and Office of Information Technology's Research Computing.

This report was prepared as an account of work sponsored by an agency of the United States Government. Neither the United States Government nor any agency thereof, nor any of their employees, makes any warranty, express or implied, or assumes any legal liability or responsibility for the accuracy, completeness, or usefulness of any information, apparatus, product, or process disclosed, or represents that its use would not infringe privately owned rights. Reference herein to any specific commercial product, process, or service by trade name, trademark, manufacturer, or otherwise does not necessarily constitute or imply its endorsement, recommendation, or favoring by the United States Government or any agency thereof. The views and opinions of authors expressed herein do not necessarily state or reflect those of the United States Government or any agency thereof.

\newpage

\bibliography{VQA_landscape.bib}

\onecolumngrid

\appendix

\section{Proof of Lemma \eqref{theoremNonEigen}}\label{App:Lemma1Proof}
We first provide an outline of the proof, including major components and equations, in appendix \ref{subsec:appendix1Outline}, followed by a detailed proof in appendix \ref{subsec:appendix1Details}.

\subsection{\label{subsec:appendix1Outline} Proof Outline}
For ease and consistency of notation throughout the proof, we implicitly associate $H_{j}$ with the vertex subset $S_{j}$ on which $H_{j}$ acts.
The first observation is that for all $j$ such that $(a,b)\not\in \mathsf{Cut}(H_{j})$, we have that $e^{-i\theta_{j}H_{j}}$ commutes with $Z_{a}Z_{b}$. This observation motivates us to define the set $\mathcal{C}_{(a,b)} = \{H_{j}\in \mathcal{A}\mid (a,b)\in \mathsf{Cut}(H_{j})\}$ and $\mathcal{K}_{(a,b)}=\{K\subset \mathcal{C}_{(a,b)}\mid \oplus\{H\in K\}=\emptyset\}$, which allows for rewriting the objective function $J(\bm{\theta})$ as:
\begin{align}\label{eq:costSimplified1}
    J(\bm{\theta}) &= \sum_{(a,b)\in E} w_{a,b} \sum_{K\in \mathcal{K}_{(a,b)}} \Bigg[\prod_{H_{j}\in \mathcal{C}_{(a,b)}-K} \cos(2\theta_{j}) \prod_{H_{j}\in K} i\sin(2\theta_{j})\Bigg]
\end{align}
Notice that $\cos(2\theta_{k})$ and $\sin(2\theta_{k})$ never appear in the same product together, such that for a certain $k$ we can write:
\begin{align}\label{eq:costK}
    J(\bm{\theta}) &= \cos(2\theta_{k})S_{k} + \sin(2\theta_{k})T_{k} + V_{k},
\end{align}
where $S_{k},T_{k},V_{k}$ do not depend on $\theta_{k}$.
This result allows for easy expressions for the gradient and Hessian (where for ease of notation we use $\partial_{k} J(\bm{\theta})$ for the $k$-th gradient element and $\partial_{j,k}^{2}J(\bm{\theta})$ for the $j,k$ element of the Hessian):
\begin{align}\label{eq:gradHessSimp}
    \partial_{k} J(\bm{\theta}) &= 2\Big[-\sin(2\theta_{k})S_{k} + \cos(2\theta_{k})T_{k}\Big]\\
    \partial^{2}_{k,k} J(\bm{\theta}) &= -4\Big[\cos(2\theta_{k})S_{k} + \sin(2\theta_{k})T_{k}\Big]
\end{align}
The condition that $\partial_{k}J(\bm{\theta})=0$ at a critical point yields:
\begin{align}\label{eq:hessK}
    \sin(2\theta_{k})S_{k} &= \cos(2\theta_{k})T_{k}
\end{align}
Extensive analysis on the relative signs of $\sin(2\theta_{k}),\cos(2\theta_{k}),S_{k},T_{k}$ yields that nondegenerate critical points are either eigenstates of $H_{p}$ or saddle points, where we use the following definition of a saddle point:
\begin{definition}\label{def:saddle}
A parameter configuration $\bm{\theta}^{*}\in \mathbb{R}^{M}$ is a \underline{saddle} point of $J$ if it is a critical point, but for all $\epsilon>0$ there exist $\bm{\theta}_{1},\bm{\theta}_{2}$ with $\norm{\bm{\theta}^{*}-\bm{\theta}_{1}},\norm{\bm{\theta}^{*}-\bm{\theta}_{2}}<\epsilon$ and $J(\bm{\theta}_{1})<J(\bm{\theta}^{*})<J(\bm{\theta}_{2})$.
Furthermore, it is well-known \cite{multivariable2009} that a sufficient condition for $\bm{\theta}^{*}$ being a saddle point is that the set of scalars $\{z^{T}\Big(\nabla^{2}J(\bm{\theta}^{*})
\Big)z: z\in \mathbb{R}^{M}\}$ contains both positive and negative elements, where $\nabla^{2}J(\bm{\theta}^{*})$ is the Hessian matrix of $J$ evaluated at $\bm{\theta}^{*}$.
\end{definition}
Throughout the proof, we utilize either the former definition or the latter sufficient condition to show that a particular nondegenerate critical point is either a saddle point or an eigenstate of $H_{p}$. 

For the case of degenerate critical points, we justify and apply Proposition 3 from \cite{Bolis1980}, which states (reworded here from \cite{Bolis1980} to fit our context and notation):
\begin{proposition}\label{prop:bolisDegen}
Let $f(x)=\sum_{j=0}^{\infty} \frac{1}{j!}F_{j}(x)$ be the Taylor expansion of a function $f:\mathbb{R}^{M}\rightarrow \mathbb{R}$, where $F_{j}$ is the $j$-th Taylor form.
Let $F_{p}$ denote the first nonzero Taylor form of $f$ at a critical point $a$ of $f$, let $K_{p}$ denote the kernel of $F_{p}$, and let $F_{s}$ be the first Taylor form that does not vanish identically on $K_{p}$ (noting that at a critical point $2\leq p<s$, and $s$ may not exist).
Now, suppose that $F_{p}$ is positive semi-definite but not positive definite.
If $s$ exists and $F_{s}$ takes a negative value on $K_{p}$, or if $s$ does not exist, then $a$ is a saddle point of $f$.
\end{proposition}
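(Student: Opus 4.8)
The plan is to reduce the multivariate classification of the degenerate critical point $a$ to a collection of one-variable sign computations along rays (and, in the worst case, curves) emanating from $a$, exploiting that the relevant $f$ (namely $J$) is real-analytic, so that the leading nonzero Taylor coefficient along any probe determines the local sign. After translating so that $a=0$ and $f(0)=0$, I write $f(x)=\sum_{j\geq p}\tfrac{1}{j!}F_{j}(x)$ with $F_{p}$ the first nonvanishing form. To establish that $a$ is a saddle in the sense of Definition \ref{def:saddle}, I will exhibit, in every neighborhood of $0$, points where $f>0$ and points where $f<0$.

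The positive direction is immediate. Since $F_{p}$ is positive semidefinite and not identically zero, there is a $u$ with $F_{p}(u)>0$; moreover $p$ must be even, because an odd-degree form satisfies $F_{p}(-u)=-F_{p}(u)$ and so cannot be semidefinite unless it vanishes identically. Evaluating along the ray $t\mapsto tu$ gives $f(tu)=\tfrac{t^{p}}{p!}F_{p}(u)+O(t^{p+1})$, which is strictly positive for all small $t\neq 0$, supplying the required $\bm{\theta}_{2}$.

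For the negative direction I treat the two hypotheses separately. When $s$ exists and $F_{s}$ takes a negative value at some $v\in K_{p}$, the key observation is that every form below index $s$ vanishes at $v$: the forms $F_{j}$ with $j<p$ are identically zero, $F_{p}(v)=0$ because $v\in K_{p}$, and each $F_{j}$ with $p<j<s$ vanishes identically on the cone $K_{p}$ by the minimality defining $s$. Hence the restriction to the ray $t\mapsto tv$ collapses to $f(tv)=\tfrac{t^{s}}{s!}F_{s}(v)+O(t^{s+1})$, whose leading term is negative for small $t$ of the appropriate sign. Combined with the positive direction, this produces the sign change demanded by Definition \ref{def:saddle}, so $a$ is a saddle.

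The main obstacle is the remaining case in which $s$ does not exist, i.e.\ every Taylor form vanishes identically on $K_{p}$. Here straight-ray probes inside $K_{p}$ are useless: $f$ vanishes to infinite order along them and, by analyticity, is identically zero there, so no ray in $K_{p}$ detects negative values. The resolution is to probe along curves approaching $0$ tangentially to $K_{p}$, of the scaling form $\gamma(t)=tv+t^{1+\eta}w$ with $v\in K_{p}$ and $w$ transverse, balancing the exponent $\eta$ so that the first surviving contribution is a mixed form coupling the kernel and transverse directions rather than the (now higher-order) positive term descended from $F_{p}$; one then picks $w$ to drive this coupling term negative. Making this balancing rigorous---isolating the dominant cross term via a Newton-polygon-type analysis and verifying it can be made negative---is the crux. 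I expect the genuinely delicate point to be the boundary sub-case in which $f$ is flat not merely along $K_{p}$ but in a full neighborhood of it (the model being $f=x_{1}^{2}$, a non-strict local minimum): the conclusion can only hold once this totally degenerate situation is excluded, so I would verify that the standing hypotheses---or, in our setting, the explicit trigonometric-polynomial structure of $J$ forcing a nonvanishing cross term---rule it out, guaranteeing that a negative-producing curve always exists.
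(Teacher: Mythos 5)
Your handling of the two directions that can be handled is correct, and it is the standard argument (note that the paper itself offers no proof of this proposition --- it is quoted from \cite{Bolis1980} --- so you were in effect reproving a literature result). Getting nearby values above $f(a)$ from a ray through some $u$ with $F_{p}(u)>0$ is immediate, and your treatment of the branch where $s$ exists is sound: since $K_{p}$ is a cone, the homogeneous forms $F_{p},\dots,F_{s-1}$ all vanish at $v\in K_{p}$ with $F_{s}(v)<0$, so along the ray one gets
\begin{align}
f(a+tv)-f(a)=\frac{t^{s}}{s!}F_{s}(v)+o(t^{s}),
\end{align}
which is negative for small $t>0$. One small simplification: analyticity is not needed here; the Peano-form Taylor remainder for a $C^{\infty}$ function along a fixed direction already suffices.

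The genuine gap is the branch where $s$ does not exist, and your own closing worry is in fact fatal rather than a technicality to be excluded. Take $f(x_{1},x_{2})=x_{1}^{2}$: here $F_{2}$ is positive semidefinite but not definite, $K_{2}$ is the $x_{2}$-axis, and \emph{every} Taylor form vanishes identically on $K_{2}$, so $s$ does not exist --- yet the origin is a (non-strict) local minimum with no nearby points where $f<f(0)$, hence not a saddle in the sense of Definition \ref{def:saddle}. The proposition as reworded contains no ``standing hypotheses'' that rule this out, so there is nothing for your Newton-polygon curve-probing step to appeal to: no curve $\gamma(t)=tv+t^{1+\eta}w$ can produce negative values that do not exist. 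What \emph{is} true in this branch (for analytic $f$) is only that $f\equiv f(a)$ on $K_{p}$ near $a$, i.e., $a$ fails to be a \emph{strict} extremum; the branch becomes correct if ``saddle'' is read in that weaker sense (plausibly the convention operative in \cite{Bolis1980}), but it is false under the paper's Definition \ref{def:saddle}. A complete proof therefore must either weaken the conclusion of this branch or import additional structure of the specific objective $J$ beyond what the proposition states; as written, your first branch stands and your second cannot be closed.
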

Applying Proposition \eqref{prop:bolisDegen} yields that any degenerate critical point is a saddle, so that combined with the cases for nondegenerate critical points, we obtain that any parameter configuration at a critical point not corresponding to an eigenstate of $H_{p}$ is a saddle. \hfill $\square$

\subsection{\label{subsec:appendix1Details} Details of the Proof}
\subsubsection{Computation of the objective function} 

We first remark that much of the computations of this section are only to derive the form of $J(\bm{\theta})$ as in equation \eqref{eq:costDep}, which has also been shown in similar forms with respect to the well-known parameter shift rule \cite{Meyer2021}.
Readers interested in the main portion of the proof can thus skip straight to equation \eqref{eq:costDep}.

To derive a manipulable form of the cost function, we first compute $J(\bm{\theta})$, which we expand by utilizing linearity, to yield a form that is easier to manipulate:
\begin{align}
    J(\bm{\theta}) &= \bra{\varphi(\bm{\theta})}H_{p}\ket{\varphi(\bm{\theta})} \nonumber\\
    &= \sum_{(a,b)\in E} w_{a,b}\bra{\varphi(\bm{\theta})}Z_{a}Z_{b}{\ket{\varphi(\bm{\theta})}} \nonumber\\
    &= \sum_{(a,b)\in E} w_{a,b}\bra{\bm{0}}\Big[(\prod_{j=1}^{M} e^{i\theta_{j}H_{j}})Z_{a}Z_{b}(\prod_{j=1}^{M} e^{-i\theta_{j}H_{j}})\Big]\ket{\bm{0}}
\end{align}
Here, notice that for all $j$ such that $(a,b)\not\in \mathsf{Cut}(H_{j})$, we have that $e^{i\theta_{j}H_{j}}$ commutes with $Z_{a}Z_{b}$.
For ease of notation, let $\mathcal{C}_{(a,b)}=\{H_{j}\in \mathcal{A}\mid (a,b)\in \mathsf{Cut}(H_{j})\}$, corresponding to the elements that do not commute with $Z_{a}Z_{b}$.
Thus:
\begin{align}
    J(\bm{\theta}) &= \sum_{(a,b)\in E} w_{a,b} \bra{\bm{0}}\Big[(\prod_{H_{j}\in \mathcal{C}_{(a,b)}} e^{i\theta_{j}H_{j}})Z_{a}Z_{b}(\prod_{H_{j}\in \mathcal{C}_{(a,b)}} e^{-i\theta_{j}H_{j}})\Big]\ket{\bm{0}}
\end{align}
Now, we first recall the well-known formula for matrices $A,B$ with $B^{2}=\mathds{1}$:
\begin{align}
    e^{i\alpha B}Ae^{-i\alpha B} &= \cos^{2}(\alpha)A + \sin^{2}(\alpha)BAB + i\sin(\alpha)\cos(\alpha)[B,A]
\end{align}
Setting $A=Z_{a}Z_{b}$, $B=H_{j}$, and $\alpha=\theta_{j}$ yields:
\begin{align}
    e^{i\theta_{j} H_{j}}Z_{a}Z_{b}e^{-i\theta_{j} H_{j}} &= \cos^{2}(\theta_{j}) Z_{a}Z_{b} + \sin^{2}(\theta_{j})H_{j}Z_{a}Z_{b}H_{j} + i\sin(\theta_{j})\cos(\theta_{j})[H_{j},Z_{a}Z_{b}]
\end{align}
For $H_{j}\in \mathcal{C}_{(a,b)}$, without loss of generality assume $\{a,b\}\cap H_{j}=\{a\}$.
Then, using simple algebra we obtain:
\begin{align}
    e^{i\theta_{j} H_{j}}Z_{a}Z_{b}e^{-i\theta_{j} H_{j}} &= \cos^{2}(\theta_{j}) Z_{a}Z_{b} + \sin^{2}(\theta_{j})H_{j}Z_{a}Z_{b}H_{j} + i\sin(\theta_{j})\cos(\theta_{j})[H_{j},Z_{a}Z_{b}] \nonumber\\
    &= \cos^{2}(\theta_{j})Z_{a}Z_{b} + \sin^{2}(\theta_{j})X_{a}Z_{a}Z_{b}X_{a} + i\sin(\theta_{j})\cos(\theta_{j})(H_{j}X_{a}Z_{b})[X_{a},Z_{a}] \nonumber\\
    &= \cos(2\theta_{j})Z_{a}Z_{b} + \sin(2\theta_{j})H_{j}X_{a}Z_{b}Y_{a} \nonumber\\
    &= \cos(2\theta_{j})Z_{a}Z_{b} + i\sin(2\theta_{j})H_{j}Z_{a}Z_{b} \nonumber\\
    &= (\cos(2\theta_{j})\mathds{1} + i\sin(2\theta_{j})H_{j})Z_{a}Z_{b}
\end{align}
Thus, since each of the $H_{j}$'s commute and $Z_{a}Z_{b}\ket{\bm{0}}=\ket{\bm{0}}$ for all $(a,b)\in E$, we arrive at: 
\begin{align}
    J(\bm{\theta}) = \sum_{(a,b)\in E} w_{a,b}\bra{\bm{0}}\Big[\prod_{H_{j}\in \mathcal{C}_{(a,b)}} \Big(\cos(2\theta_{j}) \mathds{1} + i\sin(2\theta_{j}) H_{j}\Big)\Big]\ket{\bm{0}}
\end{align}
Now, notice that the summands in the expansion of $\prod_{H_{j}\in \mathcal{C}_{(a,b)}}$ are products over all $H_{j}\in \mathcal{C}_{(a,b)}$ of either $\cos(2\theta_{j}) \mathds{1}$ or $i\sin(2\theta_{j}) H_{j}$.
Taking the expectation of a particular summand with respect to the state $\ket{\bm{0}}$ yields a non-zero value if and only if the product of the $H_{j}$'s that are included is the identity; since each $H_{j}$ corresponds to a particular vertex subset, this condition can also be expressed as the symmetric difference of all included $H_{j}$'s being the empty set.
This leads us to define $\mathcal{K}_{(a,b)}=\{K\subset \mathcal{C}_{(a,b)}\mid \oplus\{H\in K\}=\emptyset\}$, where for a particular $K\in \mathcal{K}_{(a,b)}$ the elements $H_{j}\in K$ are those corresponding to the $i\sin(2\theta_{j})$ terms, and the elements $H_{j}\in \mathcal{C}_{(a,b)}-K$ correspond to the $\cos(2\theta_{j})$ terms.
This yields:
\begin{align}
\label{eq:appcost}
    J(\bm{\theta}) &= \sum_{(a,b)\in E} w_{a,b} \sum_{K\in \mathcal{K}_{(a,b)}} \Bigg[\prod_{H_{j}\in \mathcal{C}_{(a,b)}-K} \cos(2\theta_{j})\prod_{H_{j}\in K} i\sin(2\theta_{j})\Bigg]
\end{align}
While the presence of the imaginary unit $i$ in the product may seem problematic given that the expectation $J$ must be real here, we can see that $\abs{K}$ is even for all $K\in \mathcal{K}_{(a,b)}$ (since in the construction of $\mathcal{K}_{(a,b)}$ we note that elements in $K$ must have exactly one of either $a$ or $b$, each of which requires an even number of elements in order to attain an empty symmetric difference), so that the product remains real.

Now, consider the gradient element $\partial_{k}J(\bm{\theta})\equiv \frac{\partial}{\partial \theta_{k}}J(\bm{\theta})$ corresponding to an ansatz element defined by some $H_{k}$:
\begin{align}
    \partial_{k} J(\bm{\theta}) &= \frac{\partial}{\partial \theta_{k}} \sum_{(a,b)\in E} w_{a,b} \sum_{K\in \mathcal{K}_{(a,b)}} \Bigg[\prod_{H_{j}\in \mathcal{C}_{(a,b)}-K} \cos(2\theta_{j})\prod_{H_{j}\in K} i\sin(2\theta_{j})\Bigg]
\end{align}
Since $\theta_{k}$ appears only if $H_{k}\in \mathcal{C}_{(a,b)}$, only edges $(a,b)$ in $\mathsf{Cut}(H_{k})$ have a non-zero partial derivative:
\begin{align}
    \partial_{k} J(\bm{\theta}) &= \sum_{(a,b)\in \mathsf{Cut}(H_{k})} w_{a,b}\sum_{K\in \mathcal{K}_{(a,b)}} \frac{\partial}{\partial \theta_{k}}\Bigg[\prod_{H_{j}\in \mathcal{C}_{(a,b)}-K} \cos(2\theta_{j})\prod_{H_{j}\in K} i\sin(2\theta_{j})\Bigg]
\end{align}
Since $\theta_{k}$ appears exactly once in the product $\prod_{H_{j}\in \mathcal{C}_{(a,b)}-K} \cos(2\theta_{j})\prod_{H_{j}\in K} i\sin(2\theta_{j})$ (either as $\cos(2\theta_{k})$ or $i\sin(2\theta_{k})$ according to whether $H_{k}\in K$ or not), we can split this as follows:
\begin{align}
    \partial_{k} J(\bm{\theta}) &= \sum_{(a,b)\in \mathsf{Cut}(H_{k})} w_{a,b} \Bigg(\frac{\partial}{\partial \theta_{k}} \sum_{K\in \mathcal{K}_{(a,b)}~s.t.~H_{k}\not\in K} \Bigg[\prod_{H_{j}\in \mathcal{C}_{(a,b)}-K} \cos(2\theta_{j})\prod_{H_{j}\in K} i\sin(2\theta_{j})\Bigg] \nonumber\\
    &~~~~~~~~~~~~~~~~~~~~~~~ + \frac{\partial}{\partial \theta_{k}} \sum_{K\in \mathcal{K}_{(a,b)}~s.t.~H_{k}\in K} \Bigg[\prod_{H_{j}\in \mathcal{C}_{(a,b)}-K} \cos(2\theta_{j})\prod_{H_{j}\in K} i\sin(2\theta_{j})\Bigg] \Bigg)
\end{align}
Now, notice that in order to replace $\cos(2\theta_{k})$ with its derivative $-2\sin(2\theta_{k})$, we can multiply by $\frac{-2\sin(2\theta_{k})}{\cos(2\theta_{k})}$, and similarly for the derivative of $\sin(2\theta_{k})$:
\begin{align}
    \partial_{k}J(\bm{\theta}) &= \sum_{(a,b)\in \mathsf{Cut}(H_{k})} w_{a,b} \Bigg(\frac{-2\sin(2\theta_{k})}{\cos(2\theta_{k})} \sum_{K\in \mathcal{K}_{(a,b)}~s.t.~H_{k}\not\in K} \Bigg[\prod_{H_{j}\in \mathcal{C}_{(a,b)}-K} \cos(2\theta_{j})\prod_{H_{j}\in K} i\sin(2\theta_{j})\Bigg] \nonumber\\
    &~~~~~~~~~~~~~~~~~~~~~~~~~ + \frac{2\cos(2\theta_{k})}{\sin(2\theta_{k})} \sum_{K\in \mathcal{K}_{(a,b)}~s.t.~H_{k}\in K} \Bigg[\prod_{H_{j}\in \mathcal{C}_{(a,b)}-K} \cos(2\theta_{j})\prod_{H_{j}\in K} i\sin(2\theta_{j})\Bigg] \Bigg) \nonumber\\
    &= \frac{-2\sin(2\theta_{k})}{\cos(2\theta_{k})}\Bigg(\sum_{(a,b)\in \mathsf{Cut}(H_{k})} w_{a,b} \sum_{K\in \mathcal{K}_{(a,b)}~s.t.~H_{k}\not\in K} \Bigg[\prod_{H_{j}\in \mathcal{C}_{(a,b)}-K} \cos(2\theta_{j})\prod_{H_{j}\in K} i\sin(2\theta_{j})\Bigg]\Bigg) \nonumber\\
    &~~~~~~ + \frac{2\cos(2\theta_{k})}{\sin(2\theta_{k})}\Bigg(\sum_{(a,b)\in \mathsf{Cut}(H_{k})} w_{a,b}\sum_{K\in \mathcal{K}_{(a,b)}~s.t.~H_{k}\in K} \Bigg[\prod_{H_{j}\in \mathcal{C}_{(a,b)}-K} \cos(2\theta_{j})\prod_{H_{j}\in K} i\sin(2\theta_{j})\Bigg]\Bigg)
\end{align}
Now, for ease of notation in arguing the remainder of this proof, define:
\begin{align}
    \label{eq:defSk} S_{k} &= \frac{1}{\cos(2\theta_{k})}\sum_{(a,b)\in \mathsf{Cut}(H_{k})} w_{a,b} \sum_{K\in \mathcal{K}_{(a,b)}~s.t.~H_{k}\not\in K} \Bigg[\prod_{H_{j}\in \mathcal{C}_{(a,b)}-K} \cos(2\theta_{j})\prod_{H_{j}\in K} i\sin(2\theta_{j})\Bigg]\\
    \label{eq:defTk} T_{k} &= \frac{1}{\sin(2\theta_{k})}\sum_{(a,b)\in \mathsf{Cut}(H_{k})} w_{a,b}\sum_{K\in \mathcal{K}_{(a,b)}~s.t.~H_{k}\in K} \Bigg[\prod_{H_{j}\in \mathcal{C}_{(a,b)}-K} \cos(2\theta_{j})\prod_{H_{j}\in K} i\sin(2\theta_{j})\Bigg]
\end{align}
Notice in particular that $S_{k},T_{k}$ do not depend on $\theta_{k}$, by construction since the prefactor $\frac{1}{\cos(2\theta_{k})}$ cancels the existing $\cos(2\theta_{k})$ term in the product for $S_{k}$, and similarly for $T_{k}$.
Thus, we have:
\begin{align}
    \partial_{k} J(\bm{\theta}) &= -2\sin(2\theta_{k})S_{k} + 2\cos(2\theta_{k})T_{k} = 2\Big[-\sin(2\theta_{k})S_{k} + \cos(2\theta_{k})T_{k}\Big]
\end{align}
We can then easily compute the diagonal Hessian elements $\partial^{2}_{k,k} J(\bm{\theta})\equiv\frac{\partial^{2}}{\partial\theta_{k}^{2}}J(\bm{\theta})$ as well:
\begin{align}
    \partial^{2}_{k,k} J(\bm{\theta}) &= 2\Big[-2\cos(2\theta_{k})S_{k} - 2\sin(2\theta_{k})T_{k}\Big] = -4\Big[\cos(2\theta_{k})S_{k} + \sin(2\theta_{k})T_{k}\Big]
\end{align}
Using this result, we can expand $S_{k},T_{k}$ in the formula for the diagonal Hessian element to relate its value to the objective function $J(\bm{\theta})$:
\begin{align}
    -\frac{\partial^{2}_{k,k} J(\bm{\theta})}{4} &= \cos(2\theta_{k})S_{k} + \sin(2\theta_{k})T_{k} \nonumber\\
    &= \cos(2\theta_{k})\cdot \frac{1}{\cos(2\theta_{k})}\sum_{(a,b)\in \mathsf{Cut}(H_{k})} w_{a,b} \sum_{K\in \mathcal{K}_{(a,b)}~s.t.~H_{k}\not\in K} \Bigg[\prod_{H_{j}\in \mathcal{C}_{(a,b)}-K} \cos(2\theta_{j})\prod_{H_{j}\in K} i\sin(2\theta_{j})\Bigg] \nonumber\\
    &~~~~~~ + \sin(2\theta_{k})\cdot \frac{1}{\sin(2\theta_{k})}\sum_{(a,b)\in \mathsf{Cut}(H_{k})} w_{a,b}\sum_{K\in \mathcal{K}_{(a,b)}~s.t.~H_{k}\in K} \Bigg[\prod_{H_{j}\in \mathcal{C}_{(a,b)}-K} \cos(2\theta_{j})\prod_{H_{j}\in K} i\sin(2\theta_{j})\Bigg] \nonumber\\
    &= \sum_{(a,b)\in \mathsf{Cut}(H_{k})} w_{a,b} \sum_{K\in \mathcal{K}_{(a,b)}~s.t.~H_{k}\not\in K} \Bigg[\prod_{H_{j}\in \mathcal{C}_{(a,b)}-K} \cos(2\theta_{j})\prod_{H_{j}\in K} i\sin(2\theta_{j})\Bigg] \nonumber\\
    &~~~~~~ + \sum_{(a,b)\in \mathsf{Cut}(H_{k})} w_{a,b}\sum_{K\in \mathcal{K}_{(a,b)}~s.t.~H_{k}\in K} \Bigg[\prod_{H_{j}\in \mathcal{C}_{(a,b)}-K} \cos(2\theta_{j})\prod_{H_{j}\in K} i\sin(2\theta_{j})\Bigg]
\end{align}
Since we sum over all $K\in \mathcal{K}_{(a,b)}$ such that $H_{k}\not\in K$ and those such that $H_{k}\in K$, this is in fact a complete sum:
\begin{align}
    -\frac{\partial_{k,k}^{2}J(\bm{\theta})}{4} &= \sum_{(a,b)\in \mathsf{Cut}(H_{k})} w_{a,b} \sum_{K\in \mathcal{K}_{(a,b)}} \Bigg[\prod_{H_{j}\in \mathcal{C}_{(a,b)}-K} \cos(2\theta_{j})\prod_{H_{j}\in K} i\sin(2\theta_{j})\Bigg] \nonumber\\
    &= \sum_{(a,b)\in E} w_{a,b} \sum_{K\in \mathcal{K}_{(a,b)}} \Bigg[\prod_{H_{j}\in \mathcal{C}_{(a,b)}-K} \cos(2\theta_{j})\prod_{H_{j}\in K} i\sin(2\theta_{j})\Bigg] \nonumber\\
    &~~~~~~- \sum_{(a,b)\not\in \mathsf{Cut}(H_{k})} w_{a,b} \sum_{K\in \mathcal{K}_{(a,b)}} \Bigg[\prod_{H_{j}\in \mathcal{C}_{(a,b)}-K} \cos(2\theta_{j})\prod_{H_{j}\in K} i\sin(2\theta_{j})\Bigg] \nonumber\\
    &= J(\bm{\theta}) - \sum_{(a,b)\not\in \mathsf{Cut}(H_{k})} w_{a,b} \sum_{K\in \mathcal{K}_{(a,b)}} \Bigg[\prod_{H_{j}\in \mathcal{C}_{(a,b)}-K} \cos(2\theta_{j})\prod_{H_{j}\in K} i\sin(2\theta_{j})\Bigg]
\end{align}
Thus, we have:
\begin{align}
    J(\bm{\theta}) &= -\frac{\partial^{2}_{k,k} J(\bm{\theta})}{4} + \sum_{(a,b)\not\in \mathsf{Cut}(H_{k})} w_{a,b} \sum_{K\in \mathcal{K}_{(a,b)}} \Bigg[\prod_{H_{j}\in \mathcal{C}_{(a,b)}-K} \cos(2\theta_{j})\prod_{H_{j}\in K} i\sin(2\theta_{j})\Bigg]
\end{align}
Here, since as mentioned above $\theta_{k}$ appears only if $H_{k}\in \mathcal{C}_{(a,b)}$, the second summand does not depend on $\theta_{k}$; this allows us to proceed with our analysis easily.

We now can summarize the main ingredients required here: 
\begin{align}
    \label{eq:costDep} J(\bm{\theta}) &= \cos(2\theta_{k})S_{k} + \sin(2\theta_{k})T_{k} + (\text{term not dependent on $\theta_{k}$})\\
  \label{eq:gradientk}  \partial_{k} J(\bm{\theta}) &= 2\Big[-\sin(2\theta_{k})S_{k} + \cos(2\theta_{k})T_{k}\Big]\\
   \label{eq:hessiank} \partial^{2}_{k,k} J(\bm{\theta}) &= -4\Big[\cos(2\theta_{k})S_{k} + \sin(2\theta_{k})T_{k}\Big]
\end{align}
At a critical point $\bm{\theta}^{*}$, we have that for each $k$ that $\partial_{k}J(\bm{\theta})=0$. From \eqref{eq:gradientk} we thus have that at a critical point the condition 
\begin{align}\label{eq:simpleCritCond}
    \sin(2\theta_{k})S_{k} &= \cos(2\theta_{k})T_{k}
\end{align}
has to hold for all $k$. 

\subsubsection{Case considerations} 
In order to prove that any critical point $\bm{\theta}^{*}$ not corresponding to an eigenstate of $H_{p}$ is a saddle point, we consider the cases (a)-(c) below. 
We first consider case \textbf{(a)}, and show that if for some $k$ both sides of \eqref{eq:simpleCritCond} are equal but non-zero, meaning that $\sin(2\theta_{k})\neq 0$, $\cos(2\theta_{k})\neq 0$, $S_{k}\neq 0$, and $T_{k}\neq 0$, the corresponding critical points correspond to saddle points.
We then go on to consider the case \textbf{(b)} where for some $k$ both sides are zero with $S_{k}=T_{k}=0$, showing that in this case the corresponding critical points must be saddle points too. 
Here we will distinguish between (i) non-degenerate and (ii) degenerate critical points, where for case (ii) we utilize Proposition 1. 
As the cases (a) and (b) correspond to saddle points, the only case left for which a critical point $\bm{\theta}^{*}$ could potentially not correspond to a saddle point is \textbf{(c)}, where for each $k$ either $\cos(2\theta_{k})=S_{k}=0$ or $\sin(2\theta_{k})=T_{k}=0$, so that together condition \eqref{eq:simpleCritCond} is satisfied for all $k$. 
We finally show that such critical points satisfying (c) correspond to eigenstates of $H_{p}$. \\

\noindent 
\textbf{Case (a):} There exists $k$ such that $\sin(2\theta_{k})\neq 0$, $\cos(2\theta_{k})\neq 0$, $S_{k}\neq 0$, $T_{k}\neq 0$\\
\\
If for some $k$ the above holds, we can rewrite condition \eqref{eq:simpleCritCond} as $\frac{\sin(2\theta_{k})}{\cos(2\theta_{k})}=\frac{T_{k}}{S_{k}}$. The objective function $J(\bm{\theta})$ given by \eqref{eq:costDep} can then be rewritten as: 
\begin{align}
    J(\bm{\theta}) &= \cos(2\theta_{k})\cdot \frac{S_{k}^{2}+T_{k}^{2}}{S_{k}} + (\text{term not dependent on $\theta_{k}$})
\end{align}
We note again that $S_{k}$ and $T_{k}$ are independent of $\theta_{k}$.
Moreover, since $\sin(2\theta_{k})\neq 0$ we have $\cos(2\theta_{k})\neq \pm 1$. 
As such, for all $\epsilon>0$ the $\epsilon$-ball around $\theta_{k}$ both increases and decreases the value of $\cos\Big(2(\theta_{k}\pm \epsilon)\Big)$. 
Thus, by definition \eqref{def:saddle} all critical points corresponding to case (a) are saddle points.\\

\noindent 
\textbf{Case (b):} There exists $k$ such that $S_{k}=T_{k}=0$ 

\begin{itemize}
	\item [(i)] Non-degenerate case (invertible Hessian)\\
	 We first note that if $S_{k}=T_{k}=0$ for some $k$, from \eqref{eq:hessiank} we see that then $\partial_{k,k}^{2}J(\bm{\theta})=0$. 
	 Now, consider the vector $u$ with 1 in the $k$-th entry and $0$ everywhere else. 
	 Since the Hessian matrix $\nabla^{2}J(\bm{\theta})$ is invertible by definition, so that $\Big(\nabla^{2}J(\bm{\theta})\Big)u\neq \bm{0}$, there exists some $l$ such that the off-diagonal Hessian element $\partial_{k,l}^{2}J(\bm{\theta})\neq 0$. 
	 For $t\in \mathbb{R}$, consider the set of vectors $v_{t}$ with value $t$ in the $l$-th position, $1$ in the $k$-th position, and $0$ everywhere else, so that 
	 \begin{align} \label{eq:app1}
		v_{t}^{T}\Big(\nabla^{2}J(\bm{\theta})\Big)v_{t}=t^{2}\partial_{l,l}^{2}J(\bm{\theta}) + 2t\partial_{k,l}^{2}J(\bm{\theta}).
	\end{align} 
	If $\partial_{l,l}^{2}J(\bm{\theta})=0$, then \eqref{eq:app1} is linear in $t$. 
	Thus, in this case the left-hand side attains for $t\in \mathbb{R}$ both positive and negative values. 
	If $\partial_{l,l}^{2}J(\bm{\theta})\neq 0$, then \eqref{eq:app1} is quadratic in $t$ with roots at $t=0$ and $t=-\frac{2\partial_{k,l}^{2}J(\bm{\theta})}{\partial_{l,l}^{2}J(\bm{\theta})}\neq 0$. 
	As such, here the left-hand side attains for $t\in \mathbb{R}$ both positive and negative values too. 
	From the sufficient condition in the definition \eqref{def:saddle} of a saddle point we conclude that in the case (b)(i) the corresponding critical points correspond to saddle points.

\item[(ii)] Degenerate case (non-invertible Hessian)\\
We first show that in order to obtain a critical point not corresponding to a saddle point, the Hessian $\nabla^{2}J(\bm{\theta})$ at these points must be diagonal. 
As above, since $S_{k}=T_{k}=0$ we know that $\partial_{k,k}^{2}J(\bm{\theta})=0$.
If the Hessian is not diagonal, then there exists some $l$ such that the off-diagonal Hessian element $\partial_{k,l}^{2}J(\bm{\theta})\neq 0$, which allows for proceeding as in case (b)(i). 
We conclude that critical points yielding a non-diagonal Hessian correspond to saddle points.  
 
Therefore, at critical points that do not immediately correspond to saddle points the Hessian must be diagonal with at least one element being $0$. 
We treat this case by applying Proposition \eqref{prop:bolisDegen} from above, identifying the function $f(x)$ as the objective function $J(\bm{\theta})$. Without loss of generality, assume the Hessian is positive semi-definite (as an analogous argument holds for the negative semi-definite case). 
Notice that the kernel of the Hessian (which we denote by $K_{p}$ as in the setting of Proposition \eqref{prop:bolisDegen}) is the set of vectors with zeros in all indices $j$ for which the $j$-th diagonal Hessian element $\partial_{j,j}^{2}J(\bm{\theta})$ is non-zero, and any real values for elements corresponding to other indices.
Furthermore, as in the setting of Proposition \eqref{prop:bolisDegen} consider the case where $s$ exists (since otherwise we immediately obtain that we have a saddle) such that $F_{s}$ is the first Taylor form that does not vanish identically on $K_{p}$.
Without loss of generality, assume $s=3$ (as an analogous argument holds for $s>3$), and let $\mathfrak{T}$ be the order-3 tensor representing the third-derivatives of $J$.
First, notice that the ``diagonal" elements satisfy $\partial_{k,k,k}^{3}J(\bm{\theta})=-8\Big[-\sin(2\theta_{k})S_{k}+\cos(2\theta_{k})T_{k}\Big]=-4\partial_{k}J(\bm{\theta})=0$, which implies that if the kernel $K_{p}$ is one-dimensional, $\mathfrak{T}$ is zero identically on $K_{p}$.
As such, in order for $\mathfrak{T}$ not to vanish identically on $K_{p}$, there must be some set of indices $j,k,l$ such that $\partial_{j,j}^{2}J(\bm{\theta})=\partial_{k,k}^{2}J(\bm{\theta})=\partial_{l,l}^{2}J(\bm{\theta})=0$ but $\partial_{j,k,l}^{3}J(\bm{\theta})\neq 0$.
Now, let $u\in K_{p}$ be the vector with 1 in the $j$-th, $k$-th, and $l$-th entries and 0 everywhere else.
Notice that by construction, applying $\mathfrak{T}$ to $u$ and $-u$ yields non-zero values with opposite signs, so in particular $\mathfrak{T}$ takes a negative value on $K_{p}$.
We can now apply Proposition \eqref{prop:bolisDegen}, which states that we have a saddle point in this case.
\end{itemize}

\noindent 
\textbf{Case (c)} For all $k$, either $\sin(2\theta_{k})=0$ and $T_{k}=0$, or $\cos(2\theta_{k})=0$ and $S_{k}=0$\\
\\
From \eqref{eq:appcost} we recall the form of the objective function: 
\begin{align}
    J(\bm{\theta}) &= \sum_{(a,b)\in E} w_{a,b} \sum_{K\in \mathcal{K}_{(a,b)}} \Bigg[\prod_{H_{j}\in \mathcal{C}_{(a,b)}-K} \cos(2\theta_{j})\prod_{H_{j}\in K} i\sin(2\theta_{j})\Bigg]
\end{align}
Within the set $\mathcal{C}_{(a,b)}$, each $H_{j}$ either has $\sin(2\theta_{j})=0$ or $\cos(2\theta_{j})=0$.
This means that at most one of the products can be non-zero, since all other products will contain at least one configuration with $\sin(2\theta_{j})$ or $\cos(2\theta_{j})$ being 0.
This non-zero product therefore either takes the value $1$ or $-1$ (as it is a product of $\sin$ and $\cos$ that are each either 1 or $-1$), yielding:
\begin{align}
    J(\bm{\theta}) &= \sum_{(a,b)\in E} (\pm w_{a,b}),
\end{align}
where the sign of a particular $w_{a,b}$ is determined by whether the number of $H_{j}\in \mathcal{C}_{(a,b)}$ that have $\cos(2\theta_{j})=-1$ or $\sin(2\theta_{j})=-1$ is even or odd.
As discussed in section \ref{sec:ansatzBackground}, this assignment of signs precisely corresponds to a cut of the graph, which in turn corresponds to an eigenstate of $H_{p}$, as desired.

At this juncture we also remark that an alternate proof of case (c) can be seen by noticing that $\sin(2\theta_{k})=0$ or $\cos(2\theta_{k})=0$, implying $\theta_{k}=n\pi$ and $\theta_k =(2n+1)/2 \pi $ respectively, such that all gates belong to full flips of qubits or a global phase.
As the circuit starts in an eigenstate (namely, $\ket{\bm{0}}$), the circuit also ends in an eigenstate, as desired.

 \hfill $\square$

\section{ Barren Plateaus}\label{App:Barren}

The phenomenon of barren plateaus has recently been considered as one of the main bottlenecks for VQAs \cite{VQAReview}. A barren plateau appears if the gradient of the objective function becomes exponentially small, which is typically analyzed by randomly ``sampling'' the optimization landscape given by $J(\bm{\theta})$. That is, a barren plateau appears if the variance $\text{Var}\Big(\partial_{k}J(\bm{\theta})\Big)$ of the components of the gradient becomes exponentially small in the number of qubits $n$, while the expectation $\mathbb{E}[\partial_{k}J(\bm{\theta})]$ vanishes for all $k$. Here, we compute the variances explicitly for the $\mathbb{X}$-ansatz. The expectations are taken over the parameters $\bm{\theta}$, each of which are considered to be independently and identically uniformly distributed in $[0,2\pi)$. Throughout the computations, we utilize the well-known identities $\mathbb{E}[\sin(x)]=\mathbb{E}[\cos(x)]=0$ and $\mathbb{E}[\sin^{2}(x)]=\mathbb{E}[\cos^{2}(x)]=\frac{1}{2}$. 

In order to compute the variances, we first recall from \eqref{eq:gradientk} the form of the gradient element $\partial_{k}J(\bm{\theta})$:
\begin{align}
    \partial_{k} J(\bm{\theta}) &= \frac{-2\sin(2\theta_{k})}{\cos(2\theta_{k})}\Bigg(\sum_{(a,b)\in \mathsf{Cut}(H_{k})} w_{a,b} \sum_{K\in \mathcal{K}_{(a,b)}~s.t.~H_{k}\not\in K} \Bigg[\prod_{H_{j}\in \mathcal{C}_{(a,b)}-K} \cos(2\theta_{j})\prod_{H_{j}\in K} i\sin(2\theta_{j})\Bigg]\Bigg)\nonumber \\
    &~~~~~~ + \frac{2\cos(2\theta_{k})}{\sin(2\theta_{k})}\Bigg(\sum_{(a,b)\in \mathsf{Cut}(H_{k})} w_{a,b}\sum_{K\in \mathcal{K}_{(a,b)}~s.t.~H_{k}\in K} \Bigg[\prod_{H_{j}\in \mathcal{C}_{(a,b)}-K} \cos(2\theta_{j})\prod_{H_{j}\in K} i\sin(2\theta_{j})\Bigg]\Bigg)\\
    &= 2\Big[-\sin(2\theta_{k})S_{k} + \cos(2\theta_{k})T_{k}\Big]
\end{align}
Thus, since $S_{k},T_{k}$ are independent from $\theta_{k}$ and each have finite expectation, we can write:
\begin{align}
    \mathbb{E}[\partial_{k}J(\bm{\theta})] &= 2\Big[-\mathbb{E}[\sin(2\theta_{k})]\mathbb{E}[S_{k}] + \mathbb{E}[\cos(2\theta_{k})]\mathbb{E}[T_{k}]\Big]= 2\cdot (0\cdot \mathbb{E}[S_{k}] + 0\cdot \mathbb{E}[T_{k}]) = 0,
\end{align}
where we used the identities described above.
For the variance, we can first write:
\begin{align}
    \mathbb{E}[(\partial_{k} J(\bm{\theta}))^{2}] &= \mathbb{E}[4(\sin^{2}(2\theta_{k})S_{k}^{2} - 2\sin(2\theta_{k})\cos(2\theta_{k})S_{k}T_{k} + \cos^{2}(2\theta_{k})T_{k}^{2})] \nonumber\\
    &= 4\cdot \Big[\mathbb{E}[\sin^{2}(2\theta_{k})]\mathbb{E}[S_{k}^{2}] - \mathbb{E}[\sin(4\theta_{k})]\mathbb{E}[S_{k}T_{k}] + \mathbb{E}[\cos^{2}(2\theta_{k})]\mathbb{E}[T_{k}^{2}]\Big]\nonumber\\
    &= 4\cdot \frac{1}{2}\cdot \mathbb{E}[S_{k}^{2}] - 0 + 4\cdot \frac{1}{2}\cdot \mathbb{E}[T_{k}^{2}]\nonumber\\
    &= 2\cdot \mathbb{E}[S_{k}^{2}+T_{k}^{2}]
\end{align}
where we used linearity and the identities described above.
Since in the expansion of $S_{k}^{2},T_{k}^{2}$ the expectation of any single sine or cosine is zero, only the square of the terms survive.
This yields:
\begin{align}
    \mathbb{E}[(\partial_{k} J(\bm{\theta}))^{2}] &= \mathbb{E}\Bigg[\frac{2}{\cos^{2}(2\theta_{k})}\sum_{(a,b)\in \mathsf{Cut}(H_{k})} w_{a,b}^{2} \sum_{K\in \mathcal{K}_{(a,b)}~s.t.~H_{k}\not\in K} \Bigg[\prod_{H_{j}\in \mathcal{C}_{(a,b)}-K} \cos^{2}(2\theta_{j})\prod_{H_{j}\in K} -\sin^{2}(2\theta_{j})\Bigg]\Bigg]\nonumber\\
    \label{eq:appVar1} &~~~ + \mathbb{E}\Bigg[\frac{2}{\sin^{2}(2\theta_{k})}\sum_{(a,b)\in \mathsf{Cut}(H_{k})} w_{a,b}^{2} \sum_{K\in \mathcal{K}_{(a,b)}~s.t.~H_{k}\in K} \Bigg[\prod_{H_{j}\in \mathcal{C}_{(a,b)}-K} \cos^{2}(2\theta_{j})\prod_{H_{j}\in K} -\sin^{2}(2\theta_{j})\Bigg]\Bigg] \\
    \label{eq:appVar2} &= 2\cdot \sum_{(a,b)\in \mathsf{Cut}(H_{k})}w_{a,b}^{2}\sum_{K\in \mathcal{K}_{(a,b)}} \Big(\frac{1}{2}\Big)^{\abs{\mathcal{C}_{(a,b)}}-1}\\
    \label{eq:appVar3} &= 4\cdot \sum_{(a,b)\in \mathsf{Cut}(H_{k})}w_{a,b}^{2}\cdot \frac{\abs{\mathcal{K}_{(a,b)}}}{2^{\abs{\mathcal{C}_{(a,b)}}}}
\end{align}
where from \eqref{eq:appVar1} to \eqref{eq:appVar2} we used the identities $\mathbb{E}[\sin^{2}(2\theta_{j})]=\mathbb{E}[\cos^{2}(2\theta_{j})]=\frac{1}{2}$ as described above, as well as noticing that since each $\theta_{j}$ are independent, each $H_{j}\in \mathcal{C}_{(a,b)}$ contributes a factor of $\frac{1}{2}$ in the expectation except for $H_{k}$, as the values corresponding to $H_{k}$ are cancelled out.
From \eqref{eq:appVar2} to \eqref{eq:appVar3}, we applied simple counting and re-arranging, yielding the relatively simple form in \eqref{eq:appVar3}.

As such, whether the variance of the gradient vanishes exponentially depends on the quantity $\abs{\mathcal{K}_{(a,b)}}/2^{\abs{\mathcal{C}_{(a,b)}}}$.
For the classical ansatz \eqref{eq:classical}, we see that $\abs{\mathcal{C}_{(a,b)}}=2$ and $\abs{\mathcal{K}_{(a,b)}}=1$, so that the variance is simply $\sum_{(a,b)\in \mathsf{Cut}(H_{k})}w_{a,b}^{2} > \mathcal{O}(1)$. Consequently, the classical ansatz \eqref{eq:classical} does not exhibit barren plateaus.
For arbitrary $\mathbb{X}$-ans\"atze, it remains an open question how the scaling of $\abs{\mathcal{K}_{(a,b)}}$ compares to that of $2^{\abs{\mathcal{C}_{(a,b)}}}$.
We remark that while the quantity $\abs{\mathcal{C}_{(a,b)}}$ can be computed in linear time, determining $\abs{\mathcal{K}_{(a,b)}}$ constitutes a bottleneck in evaluating the objective function \eqref{eq:appcost}.
In fact, determining the set $\mathcal{K}_{(a,b)}$ can be shown to be \textsf{NP}-hard: for any $\mathbb{X}$-ansatz element $H_{j}=\bigotimes_{i\in S_{j}}X_{i}$ for $S_{j}\subset V$, represent $H_{j}$ as an $n$-bit binary string $\bm{x}_{j}$ with $x_{j,k}=1$ if and only if $k\in S_{j}$.
Then, the condition $\oplus\{H\in K\}=\emptyset$ is equivalent to the condition $\oplus\{\bm{x}_{j} \in K\}=\bm{0}$ for some subset $K\subset \mathcal{C}_{(a,b)}$.
This is a well-known problem referred to as computing the minimum distance of a binary linear code, which was shown to be \textsf{NP}-hard in \cite{Vardy1997}, thus showing that determining $\mathcal{K}_{(a,b)}$ is also \textsf{NP}-hard.

Thus, we see that determining the existence of barren plateaus in $\mathbb{X}$-ans\"atze is related to whether the objective function can be evaluated efficiently on a classical computer.

\section{Comparison between the GW algorithm and BFGS for solving MaxCut using the ansatz \eqref{eq:classical} } \label{App:ClassicalNumerics}
Here, we numerically compare the effectiveness of the classical ansatz \eqref{eq:classical}, consisting of local rotations around $X$ only, and the GW algorithm for solving MaxCut. In particular, we compare the approximation ratios $\alpha_{grad}$ obtained from solving \eqref{eq:maxcutquant} using BFGS and the classical ansatz \eqref{eq:classical} against the approximation ratios $\alpha_{GW}$ obtained from the GW algorithm. In all simulations we used the scipy optimizer L-BFGS-B \cite{BFGS} with hyperparameters $\text{gtol}=10^{-6}$ and $\text{ftol}=10^{-5}$. We used the GW algorithm implemented in the python package cvxgraphalgs \cite{GWweb}.

Fig. \ref{fig:classical} shows the ratio $\alpha_{grad}/\alpha_{GW}$ as a function of the vertex degree of a randomly chosen $k$-regular graph with $n=30$ (red), $n=50$ (blue), and $n=70$ (green) vertices, for edge weights $w_{a,b}\in[0,5]$.

\begin{figure}[h!]
	\includegraphics[width=0.4\linewidth]{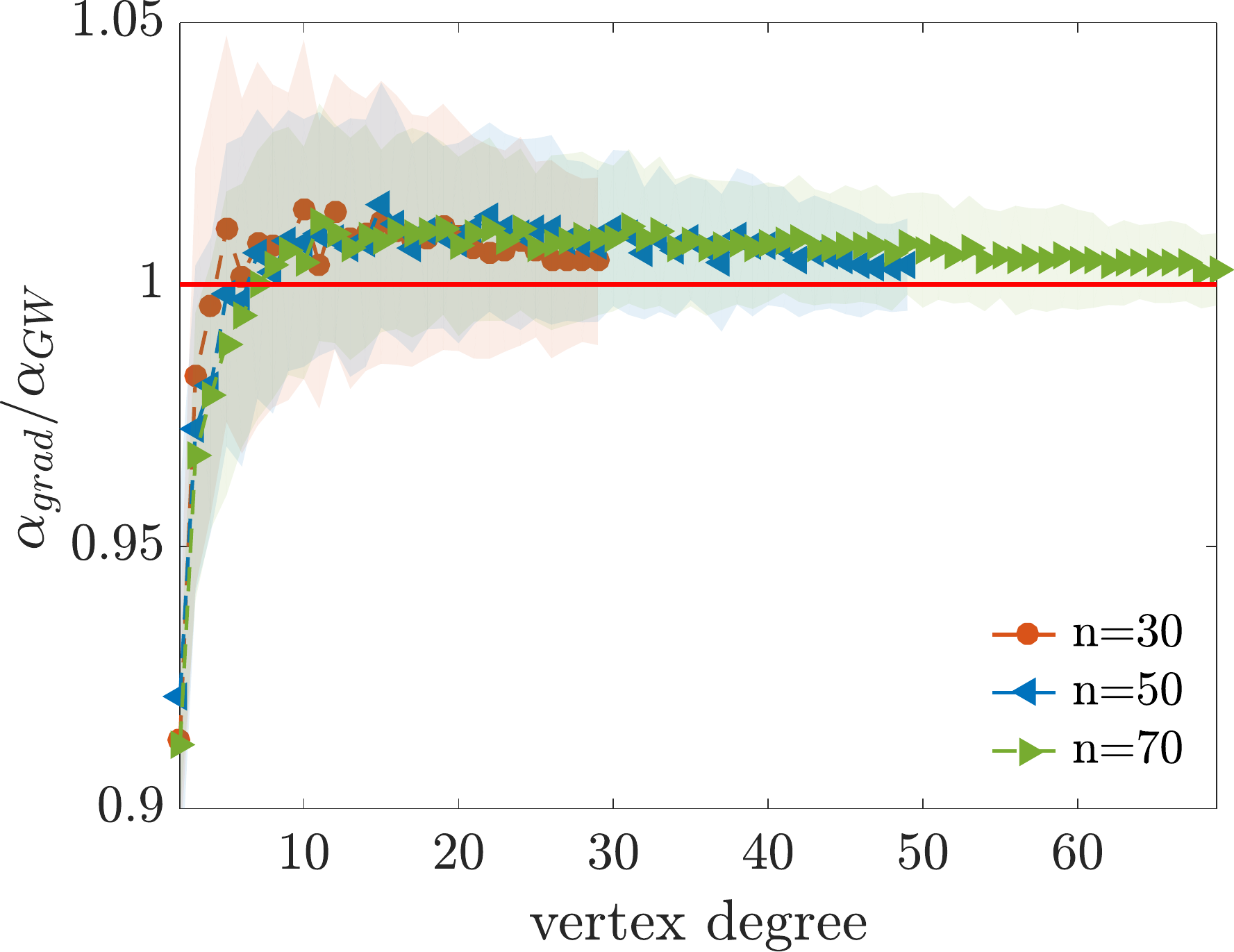}
	\caption{Comparison between the performance of the classical ansatz \eqref{eq:classical} coupled with a BFGS algorithm and the performance of the GW algorithm. The ratio between the two corresponding approximation ratios $\alpha_{grad}$ and $\alpha_{GW}$ is shown as a function of the vertex degree of a $k$-regular graph for $n=30,50,70$. Each data point shows the average value of $\alpha_{grad}/\alpha_{GW}$ taken over 200 different graph realizations, and the associated shaded area shows the standard deviation. The straight red line indicates when BFGS and \eqref{eq:classical} performs better, on average, than the GW algorithm.}    
	\label{fig:classical}
\end{figure}

\end{document}